\documentclass[12pt, english, a4paper]{amsart}

\usepackage{graphicx}

\usepackage{amsmath,amssymb}
\usepackage{bm}
\usepackage{euscript,color}
\usepackage{hyperref}
\usepackage{babel}
\usepackage{amsthm}
\usepackage{amsfonts}
\usepackage{latexsym}
\usepackage{fancyhdr}
\usepackage{enumerate}
\usepackage{multirow}

\usepackage{array}

\usepackage{tikz}\usepackage{pgfplots}
\usetikzlibrary{intersections,calc}
\usetikzlibrary{arrows,decorations.markings}
\pgfplotsset{compat=1.12}
\tikzstyle arrowstyle=[scale=12pt]

\usepackage[left=2.5cm, top=3cm, bottom=3cm, right=2.5cm]{geometry}

\newcommand{\R}{\mathbb{R}}

\makeatletter
\renewcommand{\section}{\@startsection%
{section}
{1}
{0mm}
{1.5\bigskipamount}
{0.5\bigskipamount}
{\centering\normalsize\sc}}

\renewcommand{\paragraph}{\@startsection%
{paragraph}
{4}
{0mm}
{\bigskipamount}
{-1.25ex}
{\normalsize\sl}}

\def\provedboxcontents#1{$\square$}

\makeatother


\newtheoremstyle{thm}{}{}{\slshape}{}{\scshape}{.}{0.5em}{}

\newtheoremstyle{def}{}{}{}{}{\scshape}{.}{0.5em}{}

\newtheoremstyle{rmk}{}{}{}{}{\scshape}{.}{0.5em}{}

\newtheoremstyle{claim}{}{}{}{}{\slshape}{.}{0.5em}{}

\theoremstyle{thm}
\newtheorem{newstatement}{newstatement}

\newtheorem{theorem}[newstatement]{Theorem}

\newtheorem*{conjecture*}{Conjecture}

\theoremstyle{def}

\theoremstyle{rmk}
\newtheorem{remark}[newstatement]{Remark}

\theoremstyle{claim}


\expandafter\let\expandafter\oldproof\csname\string\proof\endcsname
\let\oldendproof\endproof
\renewenvironment{proof}[1][\proofname]{%
  \oldproof[\slshape #1]%
}{\oldendproof}

\frenchspacing

\let\geq\geqslant
\let\leq\leqslant

\let\epsilon\varepsilon

\renewcommand{\emph}[1]{{\slshape #1}}
\renewcommand{\em}{\sl}

\title[A geometrical approach]{Interpreting the outcomes of research assessments: a geometrical approach}

\author[B. Cappelletti-Montano]{Beniamino Cappelletti-Montano}
  \email{b.cappellettimontano@unica.it}

\author[S. Columbu]{Silvia Columbu}
 \email{silvia.columbu@unica.it}

\author[S. Montaldo]{Stefano Montaldo}
 \email{montaldo@unica.it}

\author[M. Musio]{Monica Musio}
 \email{mmusio@unica.it}

 \address{Dipartimento di Matematica e Informatica, Universit\`a degli Studi di
 Ca\-gli\-ari, Via Ospedale 72, 09124 Cagliari, Italy}

\date{}

\thanks{Research partially supported by the project STAGE funded by Fondazione di Sardegna. B.C.M. and S.M. are members of INDAM (Istituto Nazionale di Alta Matematica)}

\keywords{ordinal variables; discrepancy measures; simplex; geometric score; research evaluation}

\begin{document}
\begin{abstract}
Research evaluations and comparison of the assessments of academic institutions (scientific areas, departments, universities etc.) are among the major issues in recent years in higher education systems. One method, followed by some national evaluation agencies, is to assess the research quality by the evaluation of a limited number of publications in a way that each publication is rated among $n$ classes. This method produces, for each institution, a distribution of the publications in the $n$ classes.  In this paper we introduce a natural geometric way to compare these assessments by introducing an ad hoc distance from the performance of an institution to the best possible achievable assessment.  Moreover, to avoid the methodological error of comparing non-homogeneous institutions, we introduce a {\em geometric score} based on such a distance. The latter represents  the probability that an ideal institution, with the same configuration as the one under evaluation, performs worst. We apply our method, based on the geometric score, to rank, in two specific scientific areas, the Italian universities 
using the results of the evaluation exercise  VQR 2011-2014. 

\end{abstract}

\maketitle
\section{Introduction}

Assessing the quality and/or impact of research of a given institution (university or research center) and assessing the corresponding improvement over time is one of the most difficult tasks in modern quality assurance systems. 
On the other hand, the growing development of university rankings shows that research performance is perceived to be related to the reputation of universities and, in some countries, the allocation of funds in the higher education system is linked to the research performance of institutions. 

Several methods for assessing research performances of universities have been adopted. The most controversial is the use of bibliometric indicators, such as the number of publications, total citations and/or journals impact factor. 
In this regard, it should be mentioned that various scientific associations have signed the so-called DORA declaration \cite{DORA}, which states the contrariety to the automatic use of bibliometrics in order to allocate funding for research and /or evaluate the careers of individual researchers.

Another one, much more expensive,  is the method followed by some national evaluation agencies, which assess the quality of a limited number of publications for each university, using peer-review, informed peer-review, bibliometrics methods or both, according to the scientific area. The final aim of these methods is to rate each publication among $n$ classes of quality (usually, $n$ is taken to be $4$ or $5$). For instance, in the last call of UK Research Excellence Framework (REF 2021) and Italian Quality Research Evaluation (VQR 2015-2019)  are suggested, respectively, 4 (1*, 2*, 3*, 4*, apart of unclassified) and 5 (A - excellent and extremely relevant, B - excellent, C - standard, D - sufficiently relevant, E - poor or not acceptable) classes.

There is a lively debate on the procedures, criteria and methods used in these exercises (see for instance \cite{Abramo:2016}, \cite{Baccini}, \cite{Demetrescu}, \cite{Franceschini}, \cite{taylor1995}, \cite{varin2016}). However, in this paper we do not want to go into the substance of the methodology and effectiveness of such evaluation exercises. Our aim is to consider  the more subtle question of how researchers, policy makers, citizens could / should interpret the output data. Namely, any university and any department receive an evaluation in terms of the percentage  of the submitted publications evaluated in each of the classes stated in the call. Here there are two main issues. First, the assessment, given in this way, appears to be \emph{absolute} and there is the dangerous temptation to compare directly the performances of two universities and/or departments whose composition could be very different from each other.
For instance, since the number of required publications is the same for any scientific area, it is likely that a Department of Physics will obtain, on average, better evaluations than, for instance, a Department of Law, just because of the very high average number of articles per researcher in Physics. 
Second, even assuming to compare two homogeneous aggregations (for instance the Departments of Physics of two universities, or the same Department of Physics along two or more editions of the evaluation exercise), it could not be a trivial task to understand if one assessment is better or not than the other. Of course, if the two assessments compared are those represented in the following table
\medskip
\begin{center}
\begin{tabular}{|c|c|c|c|c|c|}
\hline
Institution         &  A        &  B & C & D & E      \\
\hline
Department 1            & 100\%       & 0\% & 0\% & 0\% & 0\%          \\
Department 2         & 0\%     &  100\%    & 0\%   & 0\%   & 0\%          \\
\hline
\end{tabular}
\end{center}
\medskip
then it is easy to say that the assessment of Department 1 is better than that one of Department 2. But if the two assessments are, for instance, the following ones
\medskip
\begin{center}
\begin{tabular}{|c|c|c|c|c|c|}
\hline
Institution         &  A        &  B & C & D & E      \\
\hline
Department 1            & 20\%       & 20\% & 20\% & 20\% & 20\%          \\
Department 2         & 15\%     &  25\%    & 21\%   & 21\%   & 18\%          \\
\hline
\end{tabular}
\end{center}
\medskip
then it is not straightforward  to decide which department got the best evaluation. A further complication is the usual presence of departments composed by different scientific areas, which could lead the political decision-maker to misleading analyses of the output data arising from the evaluation exercise: taking inspiration from some real cases in Italian universities, how to compare the performances of a ``Department of Mathematics, Computer Science and Economics'' and a ``Department of Mathematics and Geosciences''? And how to compare the performances of all departments in the same university?

In this paper we try to address all these questions. Of course there is not just \emph{one} answer, since  several methods for interpreting data can be defined. However, we show that the above are essentially \emph{geometric} questions and that there is a natural geometric way to treat this topic.

Our first observation is that the outcomes in an evaluation exercise (for instance the data in the above tables) can be geometrically represented as points of the standard simplex $\Delta_n$, where $n+1$ is the number of attributes involved in the call. Then the overall assessment of a department can be ``measured'' as the ``distance'' between the point $P_0$ in  $\Delta_n$, representing the evaluation of the department, and the vertex $P_1=(1,0, \ldots, 0)$ of the simplex, which corresponds to the best possible assessment. We define such distance $\delta(P_0)$ as the length of a natural path joining $P_0$ to $P_1$, corresponding to improving the assessment of the department in the slowest smooth possible way (see Section \ref{natural-path} for details).  By the application of the beautiful geometric properties of the simplex, we find an iterative method to determine this path, obtaining a general formula for $\delta(P_0)$.

This procedure permits to associate a real number to any assessment expressed by ordinal variables. However, $\delta$ can not be used directly for comparing different departments, unless they are reasonably homogeneous. At least they should have the same size and the same internal structure in terms of research areas. In fact,  random variations are larger for small samples, so that  evaluation results tend to be ``funnel-shaped": for mega-universities it is difficult to deviate much from the average (narrow part of the funnel) while among the small ones (large part) it is frequent to see exceptional results, both for positive and negative performances.  Furthermore, each scientific area has its own peculiarities, citational trends and editorial practices, making meaningless to compare any two different scientific areas. Depending on the availability of data and on the aims of the evaluation, one can consider other possible homogeneity criteria, such as teaching duties of professors, salary, age, gender, and so on. Let $\mathcal{C}$ denote the set of all the homogeneity criteria chosen.   We can consider the set of all (ideal) departments, whose members are randomly selected from all the universities participating to the call, so that they have the same size as a given Department $D$ and satisfy the same homogeneity criteria $\mathcal{C}$ when compared to $D$. Then we can define the \emph{geometric score} of  $D$ as the proportion ${\mathcal S}_{\mathcal{C}}(D)$ of such ideal departments $D'$ for which $\delta(D')> \delta(D)$.   In other words, ${\mathcal S}_{\mathcal{C}}(D)$ represents the probability that an ideal department $D'$, with the same configuration as $D$ (hence comparable with $D$), performs worst than $D$. As we shall show in the article, it has also a nice geometric interpretation.

In this way one compares any department - and, more in general, any ``aggregate'', including a whole university itself - with its similars  only (in fact with all their possible similars).  This procedure avoids the  methodological error, very frequent in several research assessment exercises as well as in many university rankings, of comparing universities, departments and scientific areas which, in principle, cannot be compared directly.

One issue related to the geometric score is its computability. Even in the case of a small department,  the cardinality of the set of all ideal departments is a very large number, and the exact calculation of the geometric score is not practicable. However,  we can approximate the geometric score using Monte Carlo techniques that guarantee the almost sure convergence of the estimate to the geometric score. In the last part of the paper we use a simple algorithm  for the calculation of the geometric score for some aggregates of the Italian VQR 2011-2014. Namely we deal with the areas  of  ``Mathematics'' and of ``Statistics and Mathematical Methods for Decisions'', which are composed, respectively, of more than 2000 and 1000 professors in Italy. We show an easily implementable way to compute good approximations of the geometric score, and, interestingly, we find that the geometric score ranking is very different from the official ANVUR ranking which is still in use to allocate conspicuous public fundings to Italian universities.

\section{Preliminary notions: the geometry of the $n$-simplex}
Let  $P_{1},\ldots ,P_{n+1}\in \mathbb{R} ^{n+1}$ be $n+1$ points of $\mathbb{R} ^{n+1}$ which are affinely independent, i.e. the vectors  $P_{2}-P_{1}, \ldots,  P_{n+1} - P_{1}$ are linearly independent. Then, the \emph{$n$-simplex} determined by $P_{1},\ldots ,P_{n+1}$ is the subset of $\mathbb{R}^{n+1}$ given by
\begin{equation*}
\Delta_{P_{1},\ldots,P_{n+1}} := \left\{x_{1}P_{1} + \cdots + x_{n+1}P_{n+1} \colon  x_i \geq 0 \hbox{ for all $i=1,\ldots,n+1$ and } \sum_{i=1}^{n+1}x_{i} = 1 \right\}.
\end{equation*}
The convex hull of any non-empty subset of cardinality $m+1$ of  $\left\{P_{1},\ldots ,P_{n+1}\right\}$ is, in turn, a simplex, called $m$-\emph{face}. In particular  $0$-faces, i.e. the defining points $P_{1}, \ldots , P_{n+1}$ of the simplex, are called the \emph{vertices},  $1$-faces are called the \emph{edges}, and $n$-faces are called the \emph{facets}.

If one takes the points $P_{1}=(1,0,\ldots,0), P_{2}=(0,1,\ldots,0), \ldots, P_{n+1}=(0,0,\ldots,1)$ of the canonical basis of $\mathbb{R} ^{n+1}$, the corresponding simplex
\begin{align*}
\Delta_n&:=\Delta_{P_{1},\ldots,P_{n+1}}\\
& = \left\{\left(x_{1}, x_{2}, \ldots, x_{n+1}\right)\in \mathbb{R}^{n+1}  \colon x_i \geq 0 \hbox{ for all $i=1,\ldots,n+1$ and } \sum_{i=1}^{n+1}x_{i} = 1 \right\}
\end{align*}
is called the \emph{standard $n$-simplex} and is denoted by $\Delta_{n}$.  Any $n$-simplex $\Delta_{P_{1},\ldots,P_{n+1}}$ can be canonically identified with the standard $n$-simplex through the bijective mapping
\begin{equation*}
\left(x_{1},\ldots, x_{n+1}\right) \in \Delta_{n} \mapsto \sum_{i=1}^{n+1}x_{i}P_{i} \in \Delta_{P_{1},\ldots,P_{n+1}}.
\end{equation*}
Thus, from now on we shall deal only with the standard $n$-simplex.  Notice that $\Delta_0$ is just the point $1 \in \mathbb{R}$, $\Delta_1$ the line segment in $\mathbb{R}^{2}$ joining $(1,0)$ to $(0,1)$, $\Delta_2$ the
equilateral triangle in $\mathbb{R}^3$ whose vertices are $(1,0,0)$, $(0,1,0)$, $(0,0,1)$, and $\Delta_3$ the regular tetrahedron in $\mathbb{R}^4$ with vertices $(1,0,0,0)$, $(0,1,0,0)$, $(0,0,1,0)$, $(0,0,0,1)$.

We point out that $\Delta_n$ is bijective to the set of ordered $(n+1)$-tuples
\begin{equation*}
\Delta_{n}^{\ast}:=\left\{\left(s_{1}, \ldots, s_{n}, s_{n+1} \right)\in \mathbb{R}^{n+1} : 0\leq s_{1} \leq \cdots \leq s_{n} \leq s_{n+1}=1 \right\}.
\end{equation*}
Indeed,  the map
\begin{gather} \label{bijection}
\varphi : \Delta_{n} \longrightarrow \Delta_{n}^{\ast}\\
\left(x_{1},\ldots,x_{n+1}\right) \mapsto \left(x_{1}, x_{1}+x_{2}, \ldots, x_{1}+x_{2}+\cdots+x_{n}, x_{1}+x_{2}+\cdots+x_{n+1}=1\right) \nonumber
\end{gather}
is clearly injective and surjective.  The facets of $\Delta_n$, which are given by the equation $x_{i}=0$, under this bijection correspond to successive coordinates being equal, $s_{i}=s_{i-1}$.

\section{A natural path toward the best assessment}\label{natural-path}
Let us consider a typical evaluation research call, where each institution is due to submit a certain number of publications depending on its size. Let us fix a hypothetical university department which has to submit $N$ products. According to the call's rules at the end of the evaluation  each product is assigned to a class of a predefined ordinal qualitative variable, with $n+1$  attributes, ranging between the possible best assessment (usually ``excellent'') and the worst one (usually ``poor'').

Let $x_i$ be the relative frequency of the number of publications assigned to the $i$th class. Since, for each $i\in\left\{1, \ldots, n+1\right\}$, $x_i \geq 0$ and $\sum_{i=1}^{n+1}x_i=1$, the global assessment of the department can be naturally identified with a point $(x_1, \ldots, x_{n+1})$ in the standard $n-$dimensional simplex $\Delta_n$. Notice that the best evaluation that the department can achieve is represented in the simplex by the point $P_1=(1,0, \ldots,0)$, corresponding to the ideal situation in which all the submitted publications are assessed in the best class. The remaining points in $\Delta_n$ represent intermediate assessments,  starting from $P_1$ until the worst evaluation represented by $P_{n+1}$. Thus, while, from a geometrical point of view, the simplex is a highly symmetric object, in our context the order of the vertices is very important.

Suppose that the final evaluation of the department is represented by the point $P_0=(x_1^0, \ldots, x_{n+1}^0)$. Then in this geometrical framework it is natural to try to measure how ``far'' is the point $P_0$ in the simplex from the vertex $P_1$ corresponding to the best possible assessment (see Figure~\ref{figura0}).
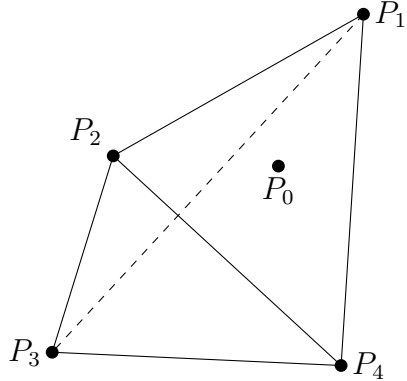
\begin{figure}[th!]
\begin{tikzpicture}
\begin{axis}[width=0.5\textwidth,
                axis lines=middle,
                inner axis line style={color=white},
                xmin=-1.4,
                xmax=1.7,
                ymin=-1.4,
                ymax=1.7,
                zmin=-1.4,
                zmax=1.7,
                xtick={0,6},
                ytick={0,6},
                ztick={0,6},
                view={105}{-5}]

\fill({1/5},{7/15},{1/3}) circle (2.4pt);

\node at ({1/5},{7/15},{1/3}) [below] {$P_0$};
\node at (-1.25299, -1.25299, 0.582337) [above left] {${P_2}$};
\node at (0.582337, -1.25299, -1.25299) [left] {${P_3}$};
\node at (-1.25299, 0.582337, -1.25299)[right] {${P_4}$};

\fill(1.5,1.5,1.5) circle (2.4pt);
\node at (1.5,1.5,1.5) [right] {$P_1$};
\draw(1.5,1.5,1.5) -- (-1.25299, -1.25299, 0.582337);
\draw[dashed](1.5,1.5,1.5)--(0.582337, -1.25299, -1.25299);
\draw(1.5,1.5,1.5)--(-1.25299, 0.582337, -1.25299);

\draw(-1.25299, -1.25299, 0.582337)--(-1.25299, 0.582337, -1.25299)--(0.582337, -1.25299, -1.25299);
\draw(-1.25299, -1.25299, 0.582337)--(0.582337, -1.25299, -1.25299);

\fill(-1.25299, -1.25299, 0.582337) circle (2.4pt);
\fill(-1.25299, 0.582337, -1.25299) circle (2.4pt);
\fill(0.582337, -1.25299, -1.25299) circle (2.4pt);

\end{axis}

\end{tikzpicture}
\caption{The final evaluation of the department is represented by the point $P_0=(x_1^0, \ldots, x_{n+1}^0)$ of the simplex. In the figure the case $n=3$.}\label{figura0}

\end{figure}

 Being the simplex a subset of the Euclidean space $\mathbb{R}^{n+1}$, in principle a natural choice would be to use the Euclidean distance. However in this way it could happen that departments with  very different evaluations  have the same distance from $P_1$ (for instance two distinct vertices $P_i$ and $P_j$ have the same Euclidean distance from $P_1$). Thus, since we are dealing with ordinal categorical variables, the way in which such measurement is done should take into account the ordering of the vertices of the simplex.  Any possible choice of such a ``distance'' between $P_0$ and $P_1$ should be defined in such a way to respect the order described above. Our idea is to construct a path from the point $P_0$ to $P_1$ so that the evaluation changes continuously through this path as slow as possible.

For exposition purposes we begin with the description of the $1$-dimensional case, in which only two categories are considered (the best and the worst assessments). Here the simplex degenerates into the line segment from the point $P_{1}=(1,0)$ to $P_{2}=(0,1)$. We express by a positive real number $a$ the ``effort'' for going from the worst evaluation $P_2$ to the best evaluation $P_1$. From a mathematical point of view we are declaring that the length of the vector $P_{2} - P_{1}$ is $a$. Then, given a point $P_0=(x_1^0, x_2^0)=(x^0_{1},1-x^0_{1})$ in such segment, a measurement $\delta(P_0)$ of the distance of $P_0$ from the best possible outcome $P_1$ is just given by the length of the segment from $P_0$ to $P_1$, that is (see Figure~\ref{figura1}~(a))

\begin{figure}[b]
\begin{tikzpicture}

\begin{axis}[width=0.45*\textwidth,
                axis lines=middle,
                xmin=-.5,
                xmax=1.5,
                ymin=-.5,
                ymax=1.5,
                xtick={0,2},
                ytick={0,2},
                ]

\draw[line width=1.2pt](1,0)--({1/3},{2/3});
\draw[decoration={markings,mark=at position 1 with
    {\arrow[scale=1.5,>=stealth]{>}}},postaction={decorate},color=red, line width=1.2pt,->,>=stealth] ({1/3},{2/3})--(0,1);

\fill({1/3},{2/3}) circle (2.4pt);
\fill(1,0) circle (2.4pt);
\fill(0,1) circle (2.4pt);

\node at (1,0) [below] {$P_2=(0,1)$};
\node at (0,1) [above right] {$P_1=(1,0)$};
\node at ({1/3},{2/3}) [above right] {$P_0=(x_1^0,x_2^0)$};

\end{axis}
\node at (3,-1) [above left] {(a)};
\end{tikzpicture}\hspace{10mm}
\begin{tikzpicture}
\begin{axis}[width=0.55*\textwidth,
                axis lines=middle,
                inner axis line style={dashed},
                xmin=0,
                xmax=1.5,
                ymin=0,
                ymax=1.5,
                zmin=0,
                zmax=1.5,
                xtick={0,2},
                ytick={0,2},
                ztick={0,2},
                view={135}{30}]

\addplot3[patch, color=blue, fill opacity=0.1, faceted color=black, line width=0.8pt] coordinates{(1,0,0) (0,1,0) (0,0,1)};

\draw[decoration={markings,mark=at position 1 with
    {\arrow[scale=1.5,>=stealth]{>}}},postaction={decorate},color=red, line width=1.2pt,->,>=stealth] ({1/5},{7/15},{1/3})--({2/3},0,{1/3});

\draw[decoration={markings,mark=at position 1 with
    {\arrow[scale=1.5,>=stealth]{>}}},postaction={decorate},color=red, line width=1.2pt,->,>=stealth]({2/3},0,{1/3})--(0,0,1);

\draw(1,0,0)--(1.5,0,0);
\draw(0,1,0)--(0,1.5,0);
\draw(0,0,1)--(0,0,1.5);

\fill(1,0,0) circle (2.4pt);
\fill(0,1,0) circle (2.4pt);
\fill(0,0,1) circle (2.4pt);

\fill({1/5},{7/15},{1/3}) circle (2.4pt);
\fill({2/3},0,{1/3}) circle (2.4pt);

\node at (1,0,0) [above left] {$P_2$};
\node at (0,1,0) [above right] {$P_3$};
\node at (0,0,1) [above right] {$P_1$};
\node at ({1/5},{7/15},{1/3}) [above right] {$P_0$};

\node at ({2/3},0,{1/3}) [above left] {$P_0^{'}$};

\end{axis}
\path (2.6,4.3) node(x) {$\ell_{12}$};
\path (3.1,3.2) node(x) {$\ell$};
\node at (4,-1) [above left] {(b)};
\end{tikzpicture}
\caption{Picture~(a) represents the 1-dimensional case: the length of the red segment is $\delta(P_0)$. Picture~(b) represents the 2-dimensional case: here  $\delta(P_0)$ is the sum of the length of the two red segments}\label{figura1}
\end{figure}
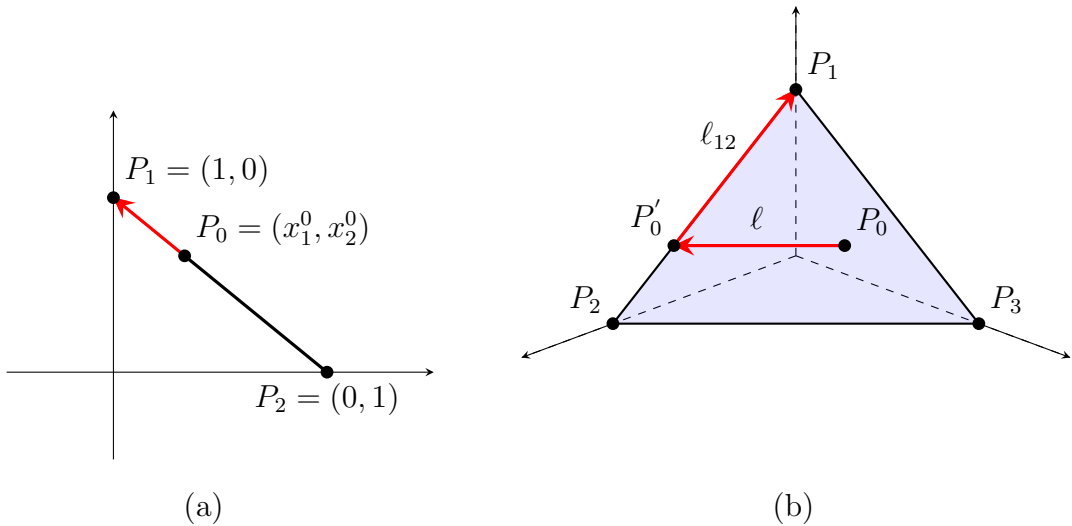

\begin{equation}\label{delta1}
\delta(P_0)=a(1-x_{1}^{0}).
\end{equation}
In the previous formula the choice $a=\sqrt{2}$ gives the standard Euclidean distance, while for $a=1$ \eqref{delta1} simplifies to $\delta(P_0)=1-x_{1}^{0}$.

Let us consider now the $2$-dimensional case, represented in Figure~\ref{figura1}~(b), corresponding to the case in which each publication can be evaluated in $3$ possible ways: the ``best'',  the ``intermediate'' and the ``worst'' ones. Unlike the $1$-dimensional case, here there is no natural order relation which can be used for giving an immediate measurement of how far $\delta(P_0)$ is from the best possible evaluation $P_1$. We proceed in the following way.  As before, let us  express  by two positive real numbers $a$ and $b$ the ``effort'' for going  from $P_{2}$ to $P_{1}$ and from $P_{3}$ to $P_{2}$, respectively. If $O$ denotes the origin in ${\mathbb R}^3$, the vectors $\mathbf{v}_{1}:=P_{1}- O = (1,0,0)$, $\mathbf{v}_{2}:=P_{2}- P_{1} = (-1,1,0)$ and $\mathbf{v}_{3}:=P_{3}- P_{2} = (0,-1,1)$ are linearly independent and thus they form a basis of $\mathbb{R}^3$. Let us consider the scalar product $g$ on $\mathbb{R}^3$, represented by the matrix
\begin{equation*}
M_{\mathcal{B}}(g)=
\begin{pmatrix}
1 & 0  & 0 \\
0 & a^2  & 0 \\
0 & 0 &   b^2
\end{pmatrix}
\end{equation*}
with respect to the basis $\mathcal{B}=\left\{ \mathbf{v}_{1}, \mathbf{v}_{2}, \mathbf{v}_{3} \right\}$. This scalar product induces a distance function $d_{g}$ on $\mathbb{R}^3$ in the usual way
\begin{equation*}
d_{g}(P,Q)=\left\| P-Q \right\|_{g} =  \sqrt{ g\left( P-Q , P-Q \right) }.
\end{equation*}
The restriction of $d_g$ to $\Delta_{2}$ provides the simplex with a distance, that will be still denoted by $d_g$. Let $P'_0$ be the intersection of the edge $\ell_{1 2}$ of the simplex through the points $P_1$ and $P_2$ with the line $\ell$  through $P_0=(x_1^0, x_2^0, x_3^0)$ with direction $\mathbf{v}_3$ (see Figure \ref{figura1}). If we restrict our attention to the line $\ell$, we recover the natural order relation in the  $1$-dimensional case previously considered. More precisely, as one moves from $P_0$ to $P'_0$ along $\ell$,  the corresponding outcome is ``improving'' in a natural way, as if we were continuously transferring part of publications that received the worst evaluation to the intermediate class. Once arrived at $P'_0$, we recover again the natural order provided by the geometry of the line $\ell_{1 2}$ connecting $P_1$ to $P_2$. Here, moving from $P'_0$  to $P_1$ corresponds to an increase of the frequency of publications in the best class and a consequent decrease of the frequency in  the intermediate class. In this way we are able to find a ``natural'' path connecting $P_0$ to $P_1$ (the red path in Figure  \ref{figura1}). Thus we can measure the ``distance'' $\delta(P_0)$ from $P_0$ to $P_1$ by the length of this path:
\begin{equation*}
\delta(P_0)=d_{g}(P_{0}, P'_{0}) + d_{g}(P'_{0}, P_{1}).
\end{equation*}
Since the line $\ell$ can be represented by the parametric equations
\begin{equation*}
\begin{cases}
x_1=x_1^0\\
x_2=x_2^0-t\\
x_3=x_3^0-t
\end{cases}
\end{equation*}
and the line $\ell_{12}$ has Cartesian equations
\begin{equation*}
\begin{cases}
x_1 + x_2 = 1\\
x_3 = 0
\end{cases}
\end{equation*}
the point $P'_0$ has coordinate $(x_{1}^0, 1  - x_{1}^0, 0)$ and so we find
\begin{align*}
\delta(P_0)&=\left\| P'_{0}-P_{0} \right\|_g + \left\| P_{1}-P'_{0} \right\|_g\\
&=\left\| -x_{3}^{0} \mathbf{v}_{3} \right\|_g + \left\| -(1-x_{1}^{0}) \mathbf{v}_{2}\right\|_g\\
&=x_{3}^{0} \sqrt{g(\mathbf{v}_{3},\mathbf{v}_{3})} + (1-x_{1}^{0}) \sqrt{g(\mathbf{v}_{2},\mathbf{v}_{2})}\\
&=(1-x_{1}^{0}-x_{2}^{0})b + (1-x_{1}^{0})a\\
&=(a+b) - (a+b)x_{1}^{0} - b x_{2}^{0}.
\end{align*}

If the evaluation call contemplates four different assessment classes, our model is encoded by the geometry of the $3$-dimensional simplex $\Delta_3$. Let $P_{0}=(x_{1}^{0},x_{2}^{0},x_{3}^{0},x_{4}^{0}) \in \Delta_3$ be the evaluation of the department. Then, generalising the above constructions, we are going to define a natural path joining $P_0$ to the best possible evaluation $P_1$. Here the ``naturality'' of such a path means that: (i) for any  $i,j \in\left\{2,3,4\right\}$ such that $i > j$ the path from $P_i$ to $P_1$ should be longer than the path from $P_j$ to $P_1$; (ii)  this property should be satisfied also by the vertices of any $2$-dimensional simplex given by the intersection of $\Delta_3$ with a plane whose direction is spanned by $P_{3}-P_{2}$ and $P_{4}-P_{2}$. Let $a$, $b$, $c$ denote positive real numbers expressing  the ``effort'' for going from $P_{2}$ to $P_{1}$,   from $P_{3}$ to $P_{2}$,  and from $P_{4}$ to $P_{3}$,  respectively. 
As in the previous case the vectors  $\mathbf{v}_{1}:=P_{1}- O$ and $\mathbf{v}_{i}:=P_{i+1}-P_{i}$, $i=1,2,3$, form a basis $\mathcal{B}$ of $\mathbb{R}^4$ and we consider the scalar product $g$ on $\mathbb{R}^4$, represented by the matrix
\begin{equation*}
M_{\mathcal{B}}(g)=
\begin{pmatrix}
1 & 0  & 0 & 0 \\
0 & a^2  & 0  & 0\\
0 & 0 &   b^2 & 0\\
0 & 0 &   0 & c^2
\end{pmatrix}\,.
\end{equation*}

This scalar product induces a distance function $d_{g}$ on $\mathbb{R}^4$ and  $\Delta_{3}$ becomes a metric space with distance the restriction of $d_{g}$ to ${\Delta_3}$. Let $\pi$ denote the plane through $P_0$ and parallel to  the plane containing the points $P_{2}, P_{3}, P_{4}$ (i.e. all the vertices of the simplex except the one corresponding to the best evaluation). Notice that $\pi$ has parametric equations
\begin{equation*}
\begin{cases}
x_1=x_1^0\\
x_2=x_2^0-t-s\\
x_3=x_3^0+t\\
x_4=x_4^0 + s
\end{cases}
\end{equation*}
The intersection of $\pi$ with the simplex $ \Delta_3$ is a $2$-dimensional simplex, the equilateral triangle with vertices $P'_{1}, P'_{2}, P'_{3}$. From the triangle obtained we can recover the $2$-dimensional construction.  The path from $P_0$ to $P_1$ is now determined by the union of 3 segments (see Figure~\ref{figura2}).
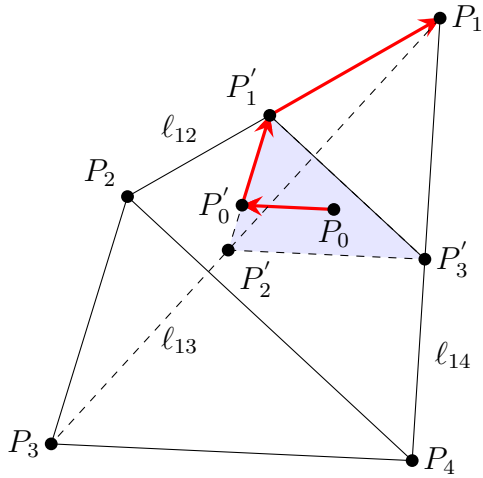
\begin{figure}
\begin{tikzpicture}
\begin{axis}[width=0.6\textwidth,
                axis lines=middle,
                inner axis line style={color=white},
                xmin=-1.4,
                xmax=1.7,
                ymin=-1.4,
                ymax=1.7,
                zmin=-1.4,
                zmax=1.7,
                xtick={0,6},
                ytick={0,6},
                ztick={0,6},
                view={105}{-5}
                ]

\addplot3[dashed, patch, color=blue, fill opacity=0.1, faceted color=black, line width=0.4pt] coordinates{(1,0,0) (0,1,0) (0,0,1)};

\draw[ decoration={markings,mark=at position 1 with
    {\arrow[scale=1.5,>=stealth]{>}}},postaction={decorate},color=red, line width=1.2pt,->,>=stealth] ({1/5},{7/15},{1/3})--({2/3},0,{1/3});

\draw(0,1,0)--(0,0,1);

\draw[ decoration={markings,mark=at position 1 with
    {\arrow[scale=1.5,>=stealth]{>}}},postaction={decorate},color=red, line width=1.2pt,->,>=stealth]({2/3},0,{1/3})--(0,0,1);

\draw[decoration={markings,mark=at position 1 with
    {\arrow[scale=1.5,>=stealth]{>}}},postaction={decorate},color=red, line width=1.2pt,->,>=stealth](0,0,1)--(1.5,1.5,1.5);


\fill(1,0,0) circle (2.4pt);
\fill(0,1,0) circle (2.4pt);
\fill(0,0,1) circle (2.4pt);

\fill({1/5},{7/15},{1/3}) circle (2.4pt);
\fill({2/3},0,{1/3}) circle (2.4pt);

\node at (1,0,0) [below right] {$P_2^{'}$};
\node at (0,1,0) [right] {$P_3^{'}$};
\node at (0,0,1) [above left] {$P_1^{'}$};
\node at ({1/5},{7/15},{1/3}) [below] {$P_0$};

\node at ({2/3},0,{1/3}) [left] {$P_0^{'}$};

\node at (-1.25299, -1.25299, 0.582337) [above left] {${P_2}$};
\node at (0.582337, -1.25299, -1.25299) [left] {${P_3}$};
\node at (-1.25299, 0.582337, -1.25299)[right] {${P_4}$};

\fill(1.5,1.5,1.5) circle (2.4pt);
\node at (1.5,1.5,1.5) [right] {$P_1$};
\draw(0,0,1) -- (-1.25299, -1.25299, 0.582337);
\draw[dashed](1.5,1.5,1.5)--(0.582337, -1.25299, -1.25299);
\draw(1.5,1.5,1.5)--(-1.25299, 0.582337, -1.25299);

\draw(-1.25299, -1.25299, 0.582337)--(-1.25299, 0.582337, -1.25299)--(0.582337, -1.25299, -1.25299);
\draw(-1.25299, -1.25299, 0.582337)--(0.582337, -1.25299, -1.25299);

\fill(-1.25299, -1.25299, 0.582337) circle (2.4pt);
\fill(-1.25299, 0.582337, -1.25299) circle (2.4pt);
\fill(0.582337, -1.25299, -1.25299) circle (2.4pt);

%
%


\end{axis}
\path (2.6,4.8) node(x) {$\ell_{12}$};
\path (6.2,1.8) node(x) {$\ell_{14}$};
\path (2.6,2.) node(x) {$\ell_{13}$};
\end{tikzpicture}
\caption{Representation of the 3-dimensional case: here $\delta(P_0)$ is the sum of the length of the three red segments}\label{figura2}

\end{figure}
Since  the line $\ell_{i j}$ joining $P_i$ with $P_j$, $i,j\in\left\{1,2,3\right\}$, has equations
\begin{equation*}
\begin{cases}
x_i + x_j = 1\\
x_k = 0 	 \ \hbox{ for any } k\notin \left\{i,j\right\}
\end{cases}
\end{equation*}
we find that the coordinates of the vertices of the $2$-dimensional simplex are $P'_{1}=(x_{1}^{0}, 1-x_{1}^{0},0,0)$, $P'_{2}=(x_{1}^{0}, 0,1-x_{1}^{0},0)$, $P'_{3}=(x_{1}^{0},0, 0,1-x_{1}^{0})$. In order to find the coordinates of $P'_0$, let $\ell'_{12}$ and $\ell$ denote, respectively, the line joining $P'_{1}$ with $P'_{2}$ and the line through $P_0$ with direction  $P'_{3}-P'_{2} $. Such two coplanar lines have equations
\begin{equation*}
\ell'_{12}:
\begin{cases}
x_1 = x_{1}^{0}\\
x_2 + x_3 = 1 - x_{1}^{0} \\
x_4 = 0 	
\end{cases} \quad \quad  \ell: \begin{cases}
x_1=x_1^0\\
x_2=x_2^0\\
x_3=x_3^0-(1-x_{1}^{0}) t\\
x_4=x_4^0 + (1 - x_{1}^{0})t
\end{cases}
\end{equation*}
Thus $P'_{0}= \ell'_{12} \cap \ell= (x_{1}^{0}, x_{2}^{0}, 1 - x_{1}^{0}-x_{2}^{0}, 0)$. Hence

\begin{align*}
\delta(P_0)&=d_{g}(P_{0}, P'_{0}) + d_{g}(P'_{0}, P'_{1}) + d_{g}(P'_{1}, P_{1})\\
&=\left\| P'_{0}-P_{0}  \right\|_g + \left\| P'_{1}-P'_{0}  \right\|_g + \left\| P_{1}-P'_{1}  \right\|_g \\
&=\left\|-x_{4}^{0} \mathbf{v}_{4}  \right\|_g + \left\| -(1-x_{1}^{0}-x_{2}^{0})\mathbf{v}_{3}  \right\|_g + \left\| -(1-x_{1}^{0})\mathbf{v}_2  \right\|_g \\
&=x_{4}^{0} \sqrt{g(\mathbf{v}_{4},\mathbf{v}_{4})} + (1-x_{1}^{0}-x_{2}^{0}) \sqrt{g(\mathbf{v}_{3},\mathbf{v}_{3})} + (1-x_{1}^{0})\sqrt{g(\mathbf{v}_{2},\mathbf{v}_{2})}\\
&=(1-x_{1}^{0}-x_{2}^{0}-x_{3}^{0}) c +  (1-x_{1}^{0}-x_{2}^{0})b + (1-x_{1}^{0})a\\
&=(a+b+c) - (a+b+c)x^{0}_{1} - (b+c)x_{2}^{0} - c x_{3}^{0}\,.
\end{align*}

In the general case, when one has $n+1$ categories, iterating the above constructions, the defined path joining $P_0=(x_{1}^{0},\ldots,x_{n+1}^{0})$ with $P_1$ can be obtained as the union of $n$ line segments, each of which lying in an
$(n-1)$-dimensional simplex. Thus we we finally obtain the following formula
\begin{equation}\label{eq:delta-n}
\delta(P_0)= a_{1}+\cdots + a_{n} - (a_{1}+\cdots + a_{n})x_{1}^{0} - (a_{2}+\cdots + a_{n}) x_{2}^{0} - \cdots - (a_{n-1}+a_{n}) x_{n-1}^{0} - a_{n} x_{n}^{0}
\end{equation}
where, for each $i \in \left\{1,\ldots, n\right\}$, $a_i$ is a positive real number expressing the ``effort'' for going from the vertex $P_{i+1}$ to $P_{i}$.

A rigorous proof of \eqref{eq:delta-n} can be done by induction and it is reported in the Appendix.

\begin{remark}
The constants $a_i$ in \eqref{eq:delta-n}, i.e. the quantification of the effort for going from a category to that one immediately higher, should be explicated in the call. The most frequent situation is when all such efforts are considered equivalent, so that the constants can be taken all equal to $1$. In this case \eqref{eq:delta-n} simplifies  to
\begin{equation}\label{eq:delta-n-bis}
\delta(P_0)= n - n x_{1}^{0} - (n-1) x_{2}^{0} - \cdots - 2 x_{n-1}^{0} -  x_{n}^{0}.
\end{equation}
In this particular case, $\delta(P_0)$ can be expressed also in terms of the Minkowski distance $d_M$, often applied in measuring dissimilarity of ordinal data - some recent application in this direction can be found in \cite{Weiss2019} and \cite{Weiss2020}. In this regard, using the bijection \eqref{bijection}, we have
\begin{align*}
\delta(P_0)&= n - n x_{1}^{0} - (n-1) x_{2}^{0} - \cdots - 2 x_{n-1}^{0} -  x_{n}^{0}\\
&=(1 - x_{1}^{0}) + (1 - x^{0}_{1} - x^{0}_{2}) + \cdots +   (1 - x^{0}_{1} - x^{0}_{2} - \cdots  - x^{0}_{n})\\
&=|x_{1}^{0} - 1| + | x^{0}_{1} + x^{0}_{2} - 1| + \cdots +   | x^{0}_{1} + x^{0}_{2} + \cdots  + x^{0}_{n} - 1|\\
&=d_{M}( (x^{0}_{1}, x^{0}_{1} + x^{0}_{2}, \ldots, x^{0}_{1} + x^{0}_{2} + \cdots + x^{0}_{n} , 1), (1, 1, \ldots, 1, 1))\\
&=d_{M}(\varphi(x^{0}_{1}, x^{0}_{2}, \ldots, x^{0}_{n}, x^{0}_{n+1}), \varphi(1, 0, \ldots, 0, 0))\\
&=d_{M}(\varphi({P}_{0}), \varphi({P}_{1}))\,.
\end{align*}

However, there could be - and actually there were - situations when the assumption $a_1=\cdots a_n = 1$ can not be necessarily taken. One example is the VQR 2011-14 which will be discussed in Section \ref{VQR}.
\end{remark}


\section{Geometric score function}
Starting from $\delta(P_0)$, see equation \eqref{eq:delta-n}, we can  naturally define a map
\begin{equation*}
d : \Delta_{n} \times \Delta_{n} \longrightarrow \mathbb{R}
\end{equation*}
such that for any $P_{0}, Q_{0} \in \Delta_{n}$
\begin{equation*}
d(P_{0},Q_{0}):=\mid \delta(P_0) - \delta(Q_0) \mid.
\end{equation*}
In other words, we are comparing the evaluations $P_0$ and $Q_0$ of two departments, measuring how ``far'' is each one from $P_1$. Note that $d$ is clearly non-negative and symmetric. Moreover, it satisfies the triangular inequality, since
\begin{align*}
d(P,P'')&=\mid \delta(P) - \delta(P'') \mid \\
&= \mid \delta(P) - \delta(P') + \delta(P')- \delta(P'') \mid \\
& \leq \mid \delta(P) - \delta(P') \mid  + \mid \delta(P')- \delta(P'') \mid \\
&=d(P,P') + d(P',P'').
\end{align*}
Note that $d$ is a pseudo-distance, since it does not satisfies the \emph{identity of indiscernibles} condition. In fact $d(P,P')=0$ does not necessarily imply  that $P=P'$. For instance, in the $2$-dimensional case, the points $P=(\frac{1}{4},\frac{3}{4},0)$ and $P'=(\frac{1}{2},\frac{1}{4},\frac{1}{4})$ are  such that $\delta(P)=\delta(P')=\frac{3 \sqrt{2}}{4}$, so that $d(P,P')=0$.

\medskip

In the applications, in order to compare the assessments of different departments or other aggregates, it is important to consider   the locus of points  of the simplex $\Delta_{n}$ that are at distance $0$ from each other.  Geometrically, in view of \eqref{eq:delta-n}, such a set can be described  by the sheaf of parallel hyperplanes of equation
\begin{equation}\label{hyperplane}
(a_{1}+\cdots + a_{n})x_{1}^{0} +(a_{2}+\cdots + a_{n}) x_{2}^{0} + \cdots + (a_{n-1}+a_{n}) x_{n-1}^{0} + a_{n} x_{n}^{0} = \textrm{const}.
\end{equation}
This allows us to divide the aggregates under study in equivalent classes, corresponding to such loci. More formally, one can consider the relation $\sim$ on $\Delta_{n}$ that identifies any two points $P$ and $P'$ such that $d(P,P')=0$. It can be easily proved that $\sim$ is an equivalence relation and then $d$ turns out to be a distance on the quotient set $\Delta_{n} / \sim$. Thus $\Delta_n$ can be partitioned into equivalence classes, which correspond to the points of the simplex belonging to each hypeplane \eqref{hyperplane} (see Figure~\ref{figura4}~(a)).

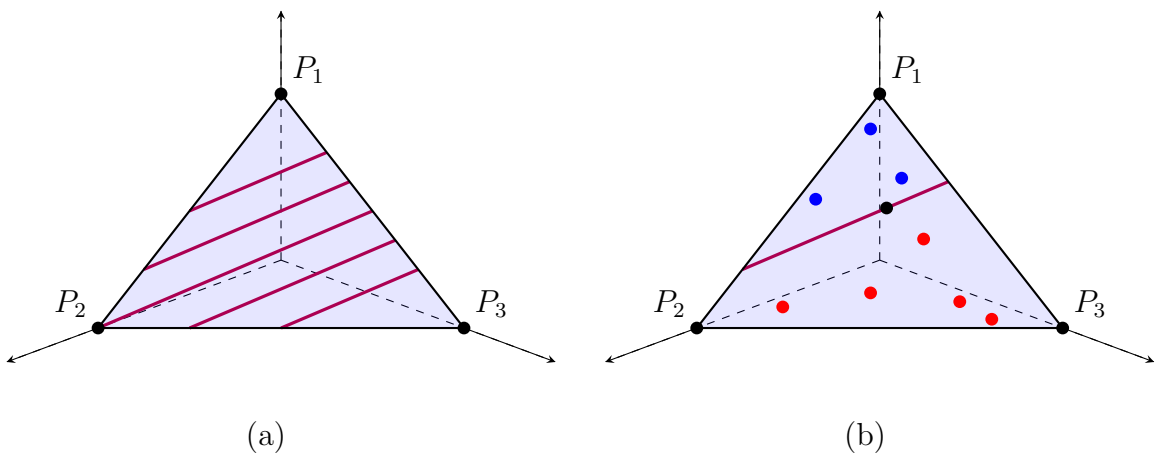
\begin{figure}[b]
\begin{tikzpicture}
\begin{axis}[width=0.55*\textwidth,
                axis lines=middle,
                inner axis line style={dashed},
                xmin=0,
                xmax=1.5,
                ymin=0,
                ymax=1.5,
                zmin=0,
                zmax=1.5,
                xtick={0,2},
                ytick={0,2},
                ztick={0,2},
                view={135}{30}]

\draw[color=purple, line width=1.2pt](1,0,0)--(0,.5,.5);
\draw[color=purple, line width=1.2pt](0.5, 0., 0.5)--(0., 0.25, 0.75);

\draw[color=purple, line width=1.2pt](0.75, 0., 0.25)--(0., 0.375, 0.625);
\draw[color=purple, line width=1.2pt](0.75, 0.25, 0.)--(0., 0.625, 0.375);
\draw[color=purple, line width=1.2pt](0.5, 0.5, 0.)--(0., 0.75, 0.25);
\addplot3[patch, color=blue, fill opacity=0.1, faceted color=black, line width=0.8pt] coordinates{(1,0,0) (0,1,0) (0,0,1)};

\draw(1,0,0)--(1.5,0,0);
\draw(0,1,0)--(0,1.5,0);
\draw(0,0,1)--(0,0,1.5);

\fill(1,0,0) circle (2.4pt);
\fill(0,1,0) circle (2.4pt);
\fill(0,0,1) circle (2.4pt);

\node at (1,0,0) [above left] {$P_2$};
\node at (0,1,0) [above right] {$P_3$};
\node at (0,0,1) [above right] {$P_1$};

\end{axis}
\node at (3,0) [above right] {(a)};

\end{tikzpicture}\hspace{5mm}
\begin{tikzpicture}
\begin{axis}[width=0.55*\textwidth,
                axis lines=middle,
                inner axis line style={dashed},
                xmin=0,
                xmax=1.5,
                ymin=0,
                ymax=1.5,
                zmin=0,
                zmax=1.5,
                xtick={0,2},
                ytick={0,2},
                ztick={0,2},
                view={135}{30}]

\draw[color=purple, line width=1.2pt](0.75, 0., 0.25)--(0., 0.375, 0.625);
\addplot3[patch, color=blue, fill opacity=0.1, faceted color=black, line width=0.8pt] coordinates{(1,0,0) (0,1,0) (0,0,1)};

\draw(1,0,0)--(1.5,0,0);
\draw(0,1,0)--(0,1.5,0);
\draw(0,0,1)--(0,0,1.5);

\fill(1,0,0) circle (2.4pt);
\fill(0,1,0) circle (2.4pt);
\fill(0,0,1) circle (2.4pt);

\fill(0.225, 0.2625, 0.5125) circle (2.4pt);

\fill[color=red](0.175, 0.7875, 0.0375) circle (2.4pt);
\fill[color=red](0.225, 0.6625, 0.1125) circle (2.4pt);
\fill[color=red](0.45, 0.4, 0.15) circle (2.4pt);
\fill[color=red](0.19, 0.43, 0.38) circle (2.4pt);
\fill[color=red](0.72, 0.19, 0.09) circle (2.4pt);

\fill[color=blue](0.4, 0.05, 0.55) circle (2.4pt);
\fill[color=blue](0.12, 0.24, 0.64) circle (2.4pt);
\fill[color=blue](0.1, 0.05, 0.85) circle (2.4pt);

\node at (1,0,0) [above left] {$P_2$};
\node at (0,1,0) [above right] {$P_3$};
\node at (0,0,1) [above right] {$P_1$};

\end{axis}
\node at (3,0) [above right] {(b)};
\end{tikzpicture}
\caption{In Picture~(a) a line of the pencil represents points with the same score function. In Picture~(b) a geometric interpretation of ${\mathcal S}_{\mathcal{C}}(A)$}\label{figura4}
\end{figure}

\medskip

We introduce a \emph{score function}  for the evaluation  of an aggregate $A$ (a department, a scientific area, etc.) in the following way: the assessment of $A$ can be realized as a point $P_A$ of $\Delta_n$, to which we can associate the number $\delta(P_A)$. This is, of course, an ``absolute'' index, in the sense that it can be used for comparing  evaluations among homogenous aggregates. For instance, it can be used to assign research fundings or to compare the quality of research of different candidates in a competition within the same scientific discipline. More in general, its usage can go beyond the context of this paper, i.e. research evaluation, since in principle it can be used whenever one deals with a situation where there are ordinal assessments. 

However, the above index can not in principle be appropriate to compare two inhomogeneous situations, and in particular the evaluations of departments, which are usually composed of   researchers working in different scientific areas.  In order to overcome this problem, we propose the following general approach. Let us fix an aggregate $A$ within a certain research evaluation call. We denote by $\mathfrak{I}(A,\mathcal{C})$  the set of all  ``ideal'' aggregates whose size and configuration (with respect to some prefixed criteria $\mathcal{C}$) are the same of $A$, and whose members are randomly selected from the set of all researchers that satisfy $\mathcal{C}$, working in any other university partecipating to the same assessment call. To any element of $\mathfrak{I}(A,\mathcal{C})$ it can be associated a point in the simplex. Then we define  the \emph{geometric score} of $A$ as
\begin{equation}\label{geometric-score}
{\mathcal S}_{\mathcal{C}}(A)=\frac{|\mathfrak{I}^{-}(A,\mathcal{C})|}{|\mathfrak{I}(A,\mathcal{C})|}
\end{equation}
where $|\mathfrak{I}(A,\mathcal{C})|$ denotes the cardinality of the set $\mathfrak{I}(A,\mathcal{C})$ and $|\mathfrak{I}^{-}(A,\mathcal{C})|$ denotes the number of ideal aggregates in $\mathfrak{I}(A,\mathcal{C})$ that are represented geometrically as  the  points of the simplex which are lying below the hyperplane \eqref{hyperplane} determined by $A$ (see Figure~\ref{figura4}~(b)).  ${\mathcal S}_{\mathcal{C}}(A)$ represents the probability that an  ``ideal'' aggregate $A'$, with the same configuration as $A$ (hence comparable with $A$), performs worse than $A$.

The choice of the conditions $\mathcal{C}$ depends on the availability of the data and should be aimed to make the evaluation as homogeneous as possibile. For instance, $\mathcal{C}$ may consist in the requirements that an ideal element of $\mathfrak{I}(A,\mathcal{C})$ must be composed of the same number as $A$ of researchers belonging to a given scientific area, and/or by the same number as $A$ of full / associate / assistant professors and/or the same proportion as $A$ of male and female researchers. Other examples for $\mathcal{C}$ may involve further information on researchers working in $A$, like teaching duties, salary, age, as well as description of the social and economic context where $A$ operates. In principle, the more  the possible choices for $\mathcal{C}$ are various and
precise, the more the evaluation is accurate and non-misleading. However, even in the extreme case when $\mathcal{C}=\emptyset$ the geometric score \eqref{geometric-score} is still informative, since it  avoids comparisons between aggregates of different size.

From the definition \eqref{geometric-score} and   the above considerations, it is clear that the geometric score of $A$  strongly depends on $\mathcal{C}$. This reflects mathematically the fact that there does not exist  ``the''  evaluation of $A$, but in fact there are many possible evaluations, each of them depending on the contextual aspects and on the possible refinements that  the user (the policy maker, the academician, the future student, the citizen, etc.) is interested to.

Operatively, the institution responsible of the evaluation (in most cases the national evaluation agency) should make available a wide range of information regarding the researchers involved in the evaluation as well as  the universities where they work. The utopia could consist in having access to a web-site where the user, according to his/her objectives, may select the most appropriate information that form the criteria $\mathcal{C}$, and consequently compute the corresponding geometric score.

\section{A case study: Italian research assessment VQR 2011-14}\label{VQR}

As an application, we consider the outcomes of the Italian Research Assessment VQR 2011-2014 within two scientific areas, the area of \emph{Mathematics} and the area of
 \emph{Statistics and Mathematical Methods for Decisions}  (coded, respectively, 01/A and 13/D according to the Italian scientific disciplines codification).  These two areas are made of a number of smaller aggregates known in the Italian system as Disciplinary Scientific Sectors (SSDs), as described in Table \ref{descrizione-01A} and Table \ref{descrizione-13D}.



\begin{table}[htbp]
  \centering
\tiny
  \caption{Composition of the area 01/A - Mathematics according to Italian Higher Education legislation}\label{descrizione-01A}
    \begin{tabular}{|l|l|}
    \hline
    {SSD Code} & {Description} \\
    \hline
    MAT/01 & Mathematical logic       \\
MAT/02 & Algebra       \\
MAT/03 & Geometry        \\
MAT/04 & Mathematics education and history of mathematics        \\
MAT/05 & Mathematical analysis          \\
MAT/06 & Probability and mathematical statistics         \\
MAT/07 & Mathematical physics           \\
MAT/08 & Numerical analysis \\
MAT/09 & Operational research           \\
    \hline
    \end{tabular}
\end{table}

\medskip

\begin{table}[htbp]
  \centering
\tiny
  \caption{Composition of the area 13/D - Statistics and Mathematical Methods for Decisions according to Italian Higher Education legislation}\label{descrizione-13D}
    \begin{tabular}{|l|l|}
    \hline
    {SSD Code} & {Description} \\
    \hline
    SECS-S/01 & Statistics       \\
SECS-S/02 & Statistics for experimental and technological research      \\
SECS-S/03 & Economic statistics        \\
SECS-S/04 & Demography        \\
SECS-S/05 & Social statistics  \\
SECS-S/06 & Mathematical methods for economics, actuarial and financial sciences       \\
    \hline
    \end{tabular}
\end{table}

The choice of these two areas is due to the different criteria adopted by the respective committees for the evaluation of the publications.

We recall that in the 2011-2014 VQR edition each researcher was expected to submit a given number of products (in most cases $2$). The submitted scientific products were classified by a committee  in one among the following classes, each one associated to a score:  Excellent - A (score: 1), Good - B (score: 0.7), Fair - C (score: 0.4), Acceptable - D (score: 0.1), Limited or not assessable - E (score: 0).

Then to each university   a so-called ``normalized average score'', denoted by $R$, was associated and was used by ANVUR to draw up a ranking. We recall that the $R$ score was computed as the ratio between the average score of the researchers of a given university in a given area / SSD and the average score of all the Italian researchers in that area / SSD (see \cite{ANVUR:2017-2}). A $R$ score greater than $1$ indicates that, in a given area / SSD,  the university under consideration performs better than the average of the all Italian universities. The final ranking, for all scientific areas, is available on the ANVUR's web-site (\cite{ANVUR:2017-2}).

Notice that the value of $R$ for each university, and hence the corresponding ANVUR ranking, is strongly linked to the score associated to each class, which even when reasonable, it is still arbitrary. Moreover, $R$ depends on the internal composition of the aggregate inside each university, which can considerably differ among universities. For instance, as for the 13/D area, the percentage repartition in the $6$ SSDs described in Table \ref{descrizione-13D} is $40$, $4$, $13$, $7$, $4$, $33$ for Roma ``La Sapienza'', while it is $53$, $0$, $0$, $0$, $0$, $47$ for the  Milano Politecnico. Last, $R$ is sensible to the size of the aggregate. This can generate some bias due to the so-called ``funnel effect'' (\cite{Spiegelhalter}).
\medskip

We have done a parallel ranking using our geometric score.

In order to apply the methods discussed in the previous sections, we need to understand how to choose, in this circumstance, the constants $a_{i}$ of formula \eqref{eq:delta-n}. Such positive numbers should correspond to the effort for moving from one category to the upper one, and should be known by the evaluating committee / referees. Unfortunately in this case there is not a univocal answer. A first option is to take into account the description of each class in the VQR call\footnote{\url{https://www.anvur.it/attivita/vqr/vqr-2011-2014/riferimenti-normativi-e-regolamentari/}}, which we have summarised in Table~\ref{VQR-call}. Indeed the call states that a publication should be considered \emph{excellent} if, ideally, it falls in the highest 10\% of the distribution of the international scientific research production of the Area in the period 2011-2014, \emph{good} if it is in the 10-30\% segment, \emph{fair} if it is in the 30-50\% segment, \emph{acceptable} if it is in the 50-80\% segment, \emph{limited or not assessable} if it is in the 80-100\% segment.
 Then a possible choice for $a_i$ could be to consider the effort for an upgrade to the higher class, measured as the distance between the lowest limit of each segment and the lowest limit of the next upper category.
Namely,  if we put $a_{4}=1$,  since the length of the interval corresponding to the class D is $1.5$-times longer than the ones of the classes B and C, one should have  $a_{1}=a_{2}=a_{4}=1$ and $a_{3}=1.5$.

\begin{table}[htbp]
  \centering
\tiny
  \caption{Description of each category in the VQR 2011-14 call and scores assigned by ANVUR to each class}\label{VQR-call}
    \begin{tabular}{|l|cc|}
    \hline
    Category & {Percentile in the distribution of worldwide publications in the area }  & {ANVUR scores} \\
    \hline
    A (Excellent) & 90--100   & 1         \\
B (Good) & 70--90   & 0.7         \\
C (Fair) & 50--70   & 0.4         \\
D (Aceptable) & 20--50   & 0.1         \\
E (Limited or not assessable) & 0--20   & 0        \\
    \hline
    \end{tabular}
\end{table}


On the other hand, one can argue that a referee is aware of the scores assigned by ANVUR to each publication falling in a certain class and used for computing the $R$ score. Thus another way for determining $a_i$ is to consider the difference between any two consecutive of such scores, obtaining $a_{1}=a_{2}=a_{3}=3$, $a_{4}=1$.

Finally, according to the codification and qualitative description used by ANVUR for the  assessment of the scientific products,   the referee may perceive the  classes as ``equidistant''. In this case we can assume $a_{i}=1$ for each $i\in\left\{1,\ldots,4\right\}$.

It is then interesting to test our geometric score in all these three situations and to compare the results with the official ANVUR outcomes. We fix as $\mathcal{C}$ the property  ``each ideal aggregate should have the same composition, with respect to each SSD, as the university under consideration'' .
Since for privacy reasons we can not associate the assessment of a research product to its author, we can not proceed by sampling directly from the set of all researchers in the area under consideration. Then, starting from the internal composition of the areas 01/A and 13/D, for a given university $U$ we have simulated an ideal aggregate with the same number of expected products in each SSD as $U$ and we have computed the corresponding score function. In principle one should compute the score function $\delta$ for all possible ideal aggregates that one can construct. However, the number of such ideal aggregates can be quite huge. For instance, for the 01/A Area of Roma ``La Sapienza''  (the biggest Italian university), this number is $43.65\times 10^{401}$.

To avoid to consider the enormous number of all possible combinations, we calculate the geometric score for each university in the two scientific areas by means of Monte Carlo simulations.


The algorithm used follows a simple scheme.  Once fixed the university $U$,  we denote with $m_i$ the number of product of the SSDs MAT/0i,  $i=1, \ldots,9$ (or alternatively, for the 13/D Area, SECS-S/0i, $i=1, \ldots,6$).  

We start with simulating $m_i$ products for the SSD MAT/0i belonging to one of the $5$ categories (from A to E) as a sample from a multivariate hypergeometric distribution with parameters $m_{i j}, j=1, \ldots,5$, the number of publications of Italian researchers in the SSD MAT/0i belonging to the category $j$ (see Tables \ref{VQR-math}--\ref{VQR-stat}). We then obtain the configuration of an ``ideal aggregate'' for which we can compute the value of $\delta$. We repeat such procedure $N$ times  and we obtain $N$ values of the  indexes $\delta_l, l=1, \ldots,N$.  We then compare all values obtained with that of the university $U$. This allows us to calculate the proportions of ideal aggregates performing worse than $U$. Since for each $l$ the random variable  $I_{\{\delta_l > \delta(U)\}}$ ($I_A$ being the indicator function of the set $A$)  follows a Bernoulli distribution with probability of success equal to  ${\mathcal S}_{\mathcal{C}}(U)$,   by the strong law of large numbers we can then approximate the value of our geometric score with   $ \frac{\sum_{l=1}^{N} I_{\{\delta_l > \delta(U)\}}}{N}$.  Then we can determine $N$ imposing that $P\Big( \frac{\sum_{l=1}^{N} I_{\{\delta_l > \delta(U)\}}}{N}-{\mathcal S}_{\mathcal{C}}(U) \geq 0.005 \Big)$  is very small (for instance of the order of $10^{-5}$). Using the Hoeffding's inequality (\cite{Hoeffding: 1963}) we have
$$P\Big( \frac{\sum_{l=1}^{N} I_{\{\delta_l > \delta(U)\}}}{N}-{\mathcal S}_{\mathcal{C}}(U) \geq 0.005 \Big) \leq e^{-2N\cdot 0.005^2 }.$$  For $N=200000$, that we fix as the number of simulations in all cases considered,  we have    $e^{-2N\cdot 0.005^2 }=4.539993 \cdot 10^{-5}$.   

We repeat such procedure for all universities in the area 01/A and then  in  the area 13/D. 

The  rankings obtained for the areas of Mathematics (01/A) and Statistics and Mathematical Methods for Decisions (13/D) are reported in Table \ref{tab:01Arank} and in Table \ref{tab:13Drank}, respectively. We have reported the official results of VQR 2011-2014 evaluation and compared it with the results arising by the application of the new method proposed considering the calculation of $\delta$ with the aforementioned three ways of choices of the constant $a_i$.

From the tables we can see how the rankings obtained with our score are quite different to that obtained from the $R$ score used by ANVUR for the VQR assessment. In order to verify the magnitude of the differences obtained, we applied a Kruskal-Wallis non-parametric test  to globally compare the four rankings. It emerged that the rankings were statistically different (p-value $ =4.627\cdot 10^{-13}$ for the area of Mathematics, and p-value $=1.377 \cdot 10^{-9}$ for the area of Statistics and Mathematical Methods for Decision). A pairwise comparison, through a post-hoc statistical test based on Wilcoxon statistic, confirmed that there is a difference between the ranking based on $R$ score and each of the rankings determined through ${\mathcal S}_{\mathcal{C}}$ (p-values were of order $10^{-10}$ when considering the scientific area 01/A and of order $10^{-7}$ for the area 13/D). On the other hand there are no relevant differences when comparing between them the rankings obtained from the three versions of ${\mathcal S}_{\mathcal{C}}$ (p-values above $0.94$ in both scientific areas).

\begin{table}[htbp]
  \centering
\tiny
    \begin{tabular}{|l|cc|ccc|ccc|ccc|}\hline
 & & & \multicolumn{3}{|c|}{(A)} & \multicolumn{3}{|c|}{(B)} & \multicolumn{3}{|c|}{(C)}\\\hline
 & &       &       &       &       &       &       &       &       &       &  \\
University & $R$ & Rank & ${\mathcal S}_{\mathcal{C}}(U)$ & Rank    & $\delta(U)$ &${\mathcal S}_{\mathcal{C}}(U)$ &  Rank         & $\delta(U)$       & ${\mathcal S}_{\mathcal{C}}(U)$& Rank & $\delta(U)$ \\
          &       &       &       &       &       &       &       &       &       &       &  \\
    \hline
  
    Pisa Normale & 1.4825 & 1     & 0.99995 & 3     & 0.37  & 0.99996 & 3     & 1.11  & 0.99997 & 3     & 0.37 \\
    Trieste SISSA & 1.4747 & 2     & 1     & 1     & 0.42  & 1.00000 & 1     & 1.14  & 1.00000 & 1     & 0.47 \\
    Pavia & 1.3733 & 3     & 1     & 1     & 0.59  & 1.00000 & 1     & 1.71  & 1.00000 & 1     & 0.61 \\
    Bergamo & 1.3077 & 4     & 0.919075 & 10    & 0.75  & 0.91712 & 12    & 2.11  & 0.91378 & 10    & 0.79 \\
    Brescia & 1.2778 & 5     & 0.99351 & 6     & 0.83  & 0.99514 & 6     & 2.33  & 0.99372 & 6     & 0.88 \\
    Cassino & 1.2576 & 6     & 0.88475 & 13    & 0.81  & 0.86903 & 14    & 2.43  & 0.84009 & 15    & 0.95 \\
    Verona & 1.2469 & 7     & 0.965495 & 9     & 0.91  & 0.97032 & 9     & 2.57  & 0.96872 & 9     & 0.95 \\
    Roma Tre & 1.2342 & 8     & 0.999885 & 4     & 0.88  & 0.99977 & 5     & 2.56  & 0.99983 & 4     & 0.94 \\
    Roma Tor Vergata & 1.1774 & 9     & 0.999605 & 5     & 1.05  & 0.99979 & 4     & 2.93  & 0.99939 & 5     & 1.13 \\
    Torino Politecnico & 1.1554 & 10    & 0.976015 & 8     & 1.1   & 0.97803 & 8     & 3.1   & 0.97619 & 7     & 1.19 \\
    Bari Politecnico & 1.1520 & 11    & 0.837835 & 15    & 1.1   & 0.83288 & 15    & 3.12  & 0.85786 & 14    & 1.16 \\
    Pisa  & 1.1204 & 12    & 0.977295 & 7     & 1.17  & 0.98849 & 7     & 3.23  & 0.97460 & 8     & 1.27 \\
    Trento & 1.1197 & 13    & 0.91099 & 12    & 1.18  & 0.92043 & 11    & 3.3   & 0.90233 & 12    & 1.28 \\
    Napoli II & 1.1156 & 14    & 0.913965 & 11    & 1.2   & 0.92258 & 10    & 3.34  & 0.90291 & 11    & 1.31 \\
    Milano Bicocca & 1.1024 & 15    & 0.88101 & 14    & 1.21  & 0.87323 & 13    & 3.43  & 0.90208 & 13    & 1.29 \\
    Marche & 1.0952 & 16    & 0.755735 & 18    & 1.19  & 0.73493 & 19    & 3.37  & 0.75044 & 20    & 1.28 \\
    Bologna & 1.0680 & 17    & 0.74515 & 19    & 1.31  & 0.76815 & 18    & 3.65  & 0.75085 & 19    & 1.42 \\
    Salento & 1.0618 & 18    & 0.66369 & 21    & 1.28  & 0.60666 & 21    & 3.64  & 0.68925 & 21    & 1.36 \\
    Milano & 1.0510 & 19    & 0.761015 & 17    & 1.34  & 0.81407 & 16    & 3.7   & 0.75799 & 18    & 1.46 \\
    Padova & 1.0483 & 20    & 0.729325 & 20    & 1.31  & 0.73434 & 20    & 3.67  & 0.77247 & 16    & 1.41 \\
    Ferrara & 1.0392 & 21    & 0.7651 & 16    & 1.34  & 0.77155 & 17    & 3.74  & 0.77105 & 17    & 1.44 \\
    Udine & 1.0301 & 22    & 0.616185 & 22    & 1.31  & 0.53283 & 22    & 3.77  & 0.59436 & 22    & 1.43 \\
    della Calabria & 1.0052 & 23    & 0.159515 & 30    & 1.4   & 0.12672 & 33    & 3.94  & 0.17688 & 30    & 1.51 \\
    Roma La Sapienza & 1.0047 & 24    & 0.13766 & 31    & 1.46  & 0.32741 & 28    & 3.92  & 0.12026 & 32    & 1.60 \\
    Milano Politecnico & 0.9991 & 25    & 0.06818 & 34    & 1.46  & 0.12891 & 32    & 3.98  & 0.06040 & 35    & 1.60 \\
    Piemonte Orientale & 0.9889 & 26    & 0.50371 & 23    & 1.38  & 0.43916 & 24    & 4.02  & 0.53354 & 24    & 1.45 \\
    Napoli Federico II & 0.9783 & 27    & 0.46423 & 25    & 1.42  & 0.34901 & 26    & 4.04  & 0.45277 & 25    & 1.55 \\
    Sannio & 0.9762 & 28    & 0.112775 & 32    & 1.58  & 0.15709 & 31    & 4.16  & 0.12287 & 31    & 1.73 \\
    Cagliari & 0.9733 & 29    & 0.480355 & 24    & 1.48  & 0.46175 & 23    & 4.16  & 0.56943 & 23    & 1.57 \\
    Firenze & 0.9644 & 30    & 0.404845 & 26    & 1.46  & 0.34262 & 27    & 4.1   & 0.43252 & 26    & 1.58 \\
    Parma & 0.9608 & 31    & 0.33844 & 27    & 1.54  & 0.35131 & 25    & 4.26  & 0.30857 & 28    & 1.69 \\
    Urbino Carlo Bo & 0.9444 & 32    & 0.218815 & 29    & 1.69  & 0.31418 & 29    & 4.39  & 0.22355 & 29    & 1.86 \\
    Salerno & 0.9373 & 33    & 0.24665 & 28    & 1.53  & 0.17767 & 30    & 4.37  & 0.31840 & 27    & 1.63 \\
    Torino & 0.9032 & 34    & 0.081775 & 33    & 1.63  & 0.06161 & 35    & 4.57  & 0.09362 & 33    & 1.77 \\
    L'Aquila & 0.8862 & 35    & 0.01643 & 45    & 1.72  & 0.01884 & 44    & 4.72  & 0.01172 & 45    & 1.91 \\
    Siena & 0.8804 & 36    & 0.06294 & 36    & 1.76  & 0.11009 & 34    & 4.72  & 0.04854 & 37    & 1.96 \\
    Perugia & 0.8682 & 37    & 0.028165 & 42    & 1.75  & 0.03985 & 38    & 4.75  & 0.03183 & 41    & 1.91 \\
    Modena e Reggio Emilia & 0.8644 & 38    & 0.06435 & 35    & 1.69  & 0.03189 & 40    & 4.81  & 0.06141 & 34    & 1.86 \\
    Bari  & 0.8514 & 39    & 0.010645 & 46    & 1.73  & 0.00500 & 46    & 4.87  & 0.01020 & 46    & 1.90 \\
    Camerino & 0.8380 & 40    & 0.04613 & 38    & 1.75  & 0.03013 & 41    & 4.97  & 0.04830 & 38    & 1.92 \\
    Chieti e Pescara & 0.8167 & 41    & 0.039 & 40    & 1.9   & 0.05355 & 36    & 5.1   & 0.04206 & 39    & 2.08 \\
    Catania & 0.8158 & 42    & 0.002995 & 47    & 1.88  & 0.00346 & 47    & 5.14  & 0.00370 & 47    & 2.05 \\
    Venezia Ca Foscari & 0.8056 & 43    & 0.0433 & 39    & 1.83  & 0.03373 & 39    & 5.15  & 0.02949 & 42    & 2.08 \\
    Basilicata & 0.7963 & 44    & 0.024345 & 43    & 1.86  & 0.01782 & 45    & 5.16  & 0.03771 & 40    & 2.00 \\
    Genova & 0.7937 & 45    & 4.00E-05 & 50    & 1.91  & 0.00002 & 50    & 5.27  & 0.00005 & 50    & 2.09 \\
    Reggio Calabria & 0.7899 & 46    & 0.055945 & 37    & 1.86  & 0.03995 & 37    & 5.24  & 0.05411 & 36    & 2.06 \\
    Messina & 0.7480 & 47    & 0.002355 & 48    & 1.95  & 0.00061 & 49    & 5.51  & 0.00201 & 48    & 2.15 \\
    Sassari & 0.7143 & 48    & 0.023 & 44    & 2.15  & 0.02680 & 43    & 5.73  & 0.02723 & 43    & 2.33 \\
    Napoli Parthenope & 0.7000 & 49    & 0.00131 & 49    & 2.06  & 0.00063 & 48    & 5.78  & 0.00140 & 49    & 2.28 \\
    Roma UNINETTUNO & 0.5952 & 50    & 0.029 & 41    & 2.42  & 0.02956 & 42    & 6.4   & 0.02176 & 44    & 2.71 \\
    \hline
    \end{tabular}%
    \vspace{0.2cm}
  \caption{Final rankings for the scientific area 01/A - Mathematics. We have denoted with (A) the choice $a_i=1, i=1, \ldots,5$, with (B) $a_1=a_2=a_3=3,a_4=1,a_5=0$, with (C) $a_1=a_2=a_4=1,a_3=1.5,a_5=0$. For each score calculated we report the associated rankings. Observations are ordered according to the VQR ranking based on $R$ values}
  \label{tab:01Arank}%
\end{table}%

\begin{table}[htbp]
  \centering
\tiny

\begin{tabular}{|l|cc|ccc|ccc|ccc|}
\hline
 & & & \multicolumn{3}{|c|}{(A)} & \multicolumn{3}{|c|}{(B)} & \multicolumn{3}{|c|}{(C)}\\\hline
  & &       &       &       &       &       &       &       &       &       &  \\
University  & $R$ & Rank & ${\mathcal S}_{\mathcal{C}}(U)$ & Rank   & $\delta(U)$ &${\mathcal S}_{\mathcal{C}}(U)$ &  Rank     & $\delta(U)$       & ${\mathcal S}_{\mathcal{C}}(U)$& Rank & $\delta(U)$ \\
      & &       &       &       &       &       &       &       &       &       &  \\
    \hline

    Milano Politecnico & 1.7647 & 1     & 0.99997 & 2     & 0.33  & 0.99999 & 3     & 1.00  & 0.99999 & 2     & 0.33 \\
    Ferrara & 1.6667 & 2     & 0.97731 & 13    & 0.50  & 0.97776 & 13    & 1.50  & 0.97776 & 13    & 0.50 \\
    Roma LUISS & 1.6667 & 2     & 0.99373 & 10    & 0.50  & 0.99380 & 9     & 1.50  & 0.99292 & 10    & 0.56 \\
    Milano & 1.6373 & 3     & 0.99990 & 4     & 0.55  & 0.99990 & 4     & 1.65  & 0.99993 & 4     & 0.55 \\
    Macerata & 1.6176 & 4     & 0.99792 & 8     & 0.58  & 0.99775 & 7     & 1.75  & 0.99821 & 8     & 0.58 \\
    Torino Politecnico & 1.4951 & 5     & 0.91082 & 18    & 0.88  & 0.92918 & 18    & 2.38  & 0.90854 & 18    & 0.94 \\
    Milano Bocconi & 1.4764 & 6     & 0.99997 & 2     & 0.94  & 1.00000 & 1     & 2.47  & 0.99996 & 3     & 1.03 \\
    Padova & 1.4537 & 7     & 1.00000 & 1     & 0.91  & 1.00000 & 1     & 2.59  & 1.00000 & 1     & 0.96 \\
    Sassari & 1.4461 & 8     & 0.99870 & 6     & 0.88  & 0.99833 & 6     & 2.63  & 0.99874 & 6     & 0.94 \\
    Perugia & 1.3313 & 9     & 0.99818 & 7     & 1.11  & 0.99761 & 8     & 3.21  & 0.99848 & 7     & 1.17 \\
    Urbino Carlo Bo & 1.3235 & 10    & 0.95960 & 15    & 1.08  & 0.94741 & 16    & 3.25  & 0.96497 & 15    & 1.12 \\
    Venezia Ca' Foscari & 1.3106 & 11    & 0.99974 & 5     & 1.11  & 0.99899 & 5     & 3.32  & 0.99980 & 5     & 1.16 \\
    Trento & 1.2717 & 12    & 0.99497 & 9     & 1.23  & 0.99342 & 10    & 3.51  & 0.99640 & 9     & 1.29 \\
    Piemonte Orientale & 1.2572 & 13    & 0.88657 & 19    & 1.35  & 0.92947 & 17    & 3.59  & 0.88162 & 19    & 1.47 \\
    Parma & 1.2567 & 14    & 0.96847 & 14    & 1.23  & 0.95430 & 15    & 3.59  & 0.97370 & 14    & 1.30 \\
    Chieti e Pescara & 1.2561 & 15    & 0.99164 & 11    & 1.19  & 0.99154 & 11    & 3.37  & 0.99207 & 11    & 1.27 \\
    Brescia & 1.2255 & 16    & 0.92867 & 17    & 1.25  & 0.87394 & 19    & 3.75  & 0.94654 & 16    & 1.29 \\
    Modena e Reggio Emilia & 1.1928 & 17    & 0.84874 & 20    & 1.42  & 0.87120 & 20    & 3.92  & 0.83820 & 20    & 1.54 \\
    Bologna & 1.1596 & 18    & 0.98210 & 12    & 1.46  & 0.98041 & 12    & 4.12  & 0.98151 & 12    & 1.60 \\
    Firenze & 1.1410 & 19    & 0.94837 & 16    & 1.51  & 0.95464 & 14    & 4.18  & 0.93483 & 17    & 1.66 \\
    Napoli II & 1.1111 & 20    & 0.66715 & 25    & 1.44  & 0.54547 & 27    & 4.33  & 0.67541 & 25    & 1.56 \\
    Torino & 1.1099 & 21    & 0.83949 & 21    & 1.55  & 0.83738 & 21    & 4.34  & 0.81922 & 21    & 1.70 \\
    Genova & 1.0873 & 22    & 0.63533 & 27    & 1.55  & 0.56815 & 26    & 4.45  & 0.61895 & 27    & 1.70 \\
    Milano Bicocca & 1.0565 & 23    & 0.77716 & 22    & 1.62  & 0.71746 & 23    & 4.61  & 0.79543 & 22    & 1.77 \\
    Bergamo & 1.0407 & 24    & 0.72058 & 24    & 1.62  & 0.60872 & 25    & 4.69  & 0.76130 & 23    & 1.73 \\
    Marche & 1.0392 & 25    & 0.73560 & 23    & 1.70  & 0.71927 & 22    & 4.70  & 0.73174 & 24    & 1.83 \\
    Salerno & 1.0114 & 26    & 0.42971 & 30    & 1.74  & 0.41217 & 31    & 4.84  & 0.48007 & 30    & 1.87 \\
    Roma Tor Vergata & 1.0076 & 27    & 0.66017 & 26    & 1.75  & 0.65762 & 24    & 4.86  & 0.63218 & 26    & 1.93 \\
    Udine & 1.0074 & 28    & 0.52801 & 28    & 1.69  & 0.43525 & 28    & 4.86  & 0.53518 & 28    & 1.84 \\
    Pisa  & 0.9741 & 29    & 0.33328 & 32    & 1.87  & 0.41446 & 29    & 5.03  & 0.30913 & 32    & 2.08 \\
    Pavia & 0.9617 & 30    & 0.50671 & 29    & 1.76  & 0.41242 & 30    & 5.10  & 0.50421 & 29    & 1.93 \\
    della Calabria & 0.9447 & 31    & 0.16967 & 37    & 1.86  & 0.15948 & 35    & 5.18  & 0.16538 & 36    & 2.06 \\
    Cagliari & 0.9276 & 32    & 0.35820 & 31    & 1.81  & 0.23802 & 32    & 5.27  & 0.39905 & 31    & 1.94 \\
    Palermo & 0.8554 & 33    & 0.12075 & 40    & 2.01  & 0.09586 & 38    & 5.64  & 0.13415 & 39    & 2.21 \\
    Milano Cattolica & 0.8507 & 34    & 0.05308 & 43    & 2.05  & 0.04405 & 42    & 5.66  & 0.06624 & 42    & 2.23 \\
    Napoli Federico II & 0.8287 & 35    & 0.02121 & 44    & 2.01  & 0.00506 & 47    & 5.77  & 0.02201 & 44    & 2.22 \\
    L'Aquila & 0.8088 & 36    & 0.20405 & 33    & 2.13  & 0.20193 & 33    & 5.88  & 0.19403 & 35    & 2.38 \\
    Insubria & 0.7994 & 37    & 0.20072 & 34    & 2.08  & 0.14105 & 36    & 5.92  & 0.20839 & 33    & 2.27 \\
    Cassino & 0.7843 & 38    & 0.15338 & 38    & 2.08  & 0.09103 & 40    & 6.00  & 0.14619 & 38    & 2.31 \\
    Salento & 0.7608 & 39    & 0.01263 & 45    & 2.28  & 0.01751 & 44    & 6.12  & 0.01532 & 45    & 2.50 \\
    Foggia & 0.7549 & 40    & 0.07633 & 41    & 2.15  & 0.04642 & 41    & 6.15  & 0.05961 & 43    & 2.43 \\
    Napoli Parthenope & 0.7376 & 41    & 0.17628 & 35    & 2.19  & 0.09225 & 39    & 6.24  & 0.20028 & 34    & 2.40 \\
    Roma LUMSA & 0.7190 & 42    & 0.17231 & 36    & 2.33  & 0.18438 & 34    & 6.33  & 0.15361 & 37    & 2.67 \\
    Roma La Sapienza & 0.6790 & 43    & 0.00000 & 48    & 2.46  & 0.00000 & 48    & 6.54  & 0.00000 & 48    & 2.75 \\
    Sannio & 0.6398 & 44    & 0.01121 & 46    & 2.42  & 0.00724 & 46    & 6.74  & 0.01463 & 46    & 2.66 \\
    Roma Europea & 0.5229 & 45    & 0.13016 & 39    & 2.67  & 0.12068 & 37    & 7.33  & 0.12100 & 40    & 3.00 \\
    Napoli Orientale & 0.4902 & 46    & 0.01111 & 47    & 2.83  & 0.01714 & 45    & 7.50  & 0.01295 & 47    & 3.17 \\
    Messina & 0.4256 & 47    & 0.00000 & 48    & 2.95  & 0.00000 & 48    & 7.83  & 0.00000 & 48    & 3.28 \\
    Teramo & 0.3676 & 48    & 0.05957 & 42    & 2.88  & 0.03435 & 43    & 8.13  & 0.06922 & 41    & 3.19 \\
    Bari  & 0.3650 & 49    & 0.00000 & 48    & 3.03  & 0.00000 & 48    & 8.14  & 0.00000 & 48    & 3.41 \\
       \hline
    \end{tabular}%
    \vspace{0.2cm}

  \caption{Final rankings for the scientific area 13/D - Statistics and Mathematical Methods for Decisions. We have denoted with (A) the choice $a_i=1, i=1, \ldots,5$, with (B) $a_1=a_2=a_3=3,a_4=1,a_5=0$, with (C) $a_1=a_2=a_4=1,a_3=1.5,a_5=0$.
  For each score calculated we report the associated rankings. Observations are ordered according to the VQR ranking based on $R$ values}
  \label{tab:13Drank}%
\end{table}%

\section{Conclusions and remarks}
As shown in Section \ref{VQR}, the results obtained applying the procedures developed in the present paper can be very different from the ranking obtained by ANVUR and used for funding Italian universities. The more one area is heterogeneous, either with respect to the numerousness either relatively to the specific research domains, the more the ANVUR  $R$ score becomes rough and the corresponding results differ from ours. For instance, we observe for the area of Mathematics (see Table \ref{tab:01Arank}) that the University of Pisa Normale loses three positions if evaluated according to geometric score instead of considering the ANVUR $R$ score. For the University of Bergamo and for  Milano Politecnico, the loss is more evident with the first one losing on average 6 positions and the second losing on average 8 positions, with a small variability depending on the $a_{i}$ constants chosen. On the other hand there are universities for which there is an evident improvement in the ranking, in some cases of even 10 positions, if the geometric score is used.

Similar considerations can be also done for the area of Statistics and Mathematical Methods for Decisions (see Table \ref{tab:13Drank}).

In fact, as pointed out in Section \ref{VQR} one can not in principle compare two aggregates of different size without risking having a funnel effect. In order to overcome this issue, in some areas  ANVUR divided universities in three classes according to the number of researchers of each aggregate (big, medium and small) and, for such areas, only the ranking within these three dimensional classes was given (but in any case the computation of the $R$ score, used for funding allocations, was made regardless the dimensional class to which each university was belonging).  However, while this remedy could mitigate some perverse effect, it can not prevent the appareance of funnel effects in each dimensional class. Moreover, in this way the analysis is inevitably less informative, since each aggregate is compared with a fewer number of other aggregates. We stress that to calculate the geometric score there is no need to make distinctions on the base of universities' dimensions as, according to the definition of  ${\mathcal S}_{\mathcal{C}}(U)$, each university $U$ is compared to all ideal universities having the same size as $U$.

 On the other hand, the scientific homogeneity is another important feature which is not considered in the ANVUR analysis but it is incapsulated in our geometric score. The best example in this way is provided by the Mathematics area. Here we have very different outcomes for each scientific discipline, reflecting the diverse publication customs and trends for the various areas of Mathematics, as well as the different methods of assessment which were used (the MCQ score for Pure Mathematics, Impact Factor for Applied Mathematics, peer review for History of Mathematics). From  Table \ref{VQR-math} it is clear that departments with  higher number of professors dealing with History of Mathematics are unfairly penalized by the ANVUR's analysis methods. With this respect, we observe that the three universities of Pisa Normale, Bergamo and Milano Politecnico, that we have mentioned above among the ones rewarded from the VQR ranking, did not count any professor or reasearcher in that scientific sector. On the contrary  departments with  higher number of professors dealing with Applied Mathematics obtain, on average, better results.

\begin{table}[htbp]\label{VQR-math}
  \centering
\tiny
  \caption{Number of total expected products and their repartition in the 5  classes of VQR 2011-2014, for each scientific sector of the area 01/A - Mathematics. The proportion of products in the 5 classes is also reported}
    \begin{tabular}{|c|c|ccccc|}
    \hline
    {scientific sector}  & {expected products} & {A}  & {B} & {C} & {D} & {E} \\
    \hline
    MAT/01 & 72 & 25	& 20	&7	&12&	8      \\
                &  & 0.347   & 0.278 & 0.097 & 0.167 & 0.167      \\
MAT/02 & 319 &  67	& 97&	59&	32&	64        \\
            &       & 0.21   & 0.304 & 0.185 & 0.10 & 0.345        \\
MAT/03 & 800 & 246	& 190 &	106 &	82 &	176    \\
            &       & 0.308   & 0.238 & 0.133 & 0.103 & 0.385    \\
MAT/04 & 132 & 18	& 45 &	28 &	20 &	21      \\
            &       & 0.136   & 0.311 & 0.212 & 0.152 & 0.227       \\
MAT/05 & 1545 & 616&	388&	220&	94	& 227   \\
            &          & 0.399   & 0.251 & 0.142 & 0.061 & 0.27   \\
MAT/06 & 255 &97	& 71 &	42 &	22	 & 23    \\
            &       & 0.38   & 0.278 & 0.165 & 0.086 & 0.169     \\
MAT/07 & 609 & 197	& 147 &	98	& 88	 & 79      \\
            &       & 0.324   & 0.241 & 0.161 & 0.145 & 0.22      \\
MAT/08 & 563 & 245	& 143 &	83 &	36 &	56       \\
            &        & 0.435   & 0.254 & 0.147 & 0.064 & 0.17       \\
MAT/09 & 230 &169	& 68 &	30 &	14	&16       \\
            &        & 0.569   & 0.229 & 0.101 & 0.047 & 0.082       \\
    \hline
    \end{tabular}
\end{table}

\begin{table}[htbp]\label{VQR-stat}
  \centering
\tiny
  \caption{Number of total expected products and their repartition in the 5  classes of VQR 2011-2014, for each scientific sector of the area 13/D - Statistics and Mathematical Methods for Decisions. The proportion of products in the 5 classes is also reported}
    \begin{tabular}{|c|c|ccccc|}
    \hline
    {scientific sector}  & {expected products} & {A}  & {B} & {C} & {D} & {E}  \\
    \hline
   SECS-S/01 & 794 &    241&	209&	94	&91	&109      \\
                   &        &0.325   & 0.281 &0.126 & 0.122 & 0.146\\
   SECS-S/02 & 45 &  11 &	10 &	5	&5&	11       \\
               &        & 0.267 & 0.244 & 0.111 & 0.111 & 0.267  \\
   SECS-S/03 & 281 & 44	& 61 &	41 &	45 &	67          \\
                    &      & 0.171 & 0.235 & 0.160 & 0.174 & 0.260          \\
   SECS-S/04 & 131 & 28	& 24	&14	&29&	27      \\
                   &        & 0.229  & 0.198  & 0.115 & 0.237 & 0.221      \\
   SECS-S/05 & 130 & 20	& 22 &	33 &	27	 & 21     \\
                    &      & 0.162 & 0.177 & 0.269 & 0.223 & 0.169     \\
   SECS-S/06 & 776 & 152 &	208 &	85 &	89	& 142        \\
                    &      & 0.224 & 0.308 & 0.126 & 0.132 & 0.21        \\
    \hline
    \end{tabular}
\end{table}

Somehow ANVUR itself was aware of the aforementioned limits of its methods, so that for the program ``Departments of Excellence'' a different methodology, conceptually much similar to ours, was  introduced (see \cite{poggi} and \cite{ANVUR:2017-1}). Without entering into details, we point out that such methodology, which uses the Central Limit Theorem, assumes  the independence of the assessments received by each publication. Such independence assumption, however, is  unrealistic, especially for smaller sectors, as well as for areas with large numbers of coauthors.

%
%

\appendix

\section{The proof of \eqref{eq:delta-n}}

In this appendix we shall give a formal proof of \eqref{eq:delta-n}. Let's start with the precise definition of $\delta(P_0)=\delta(P_0,P_1)$. For this, let $\Delta_n$ be the $n$-dimensional simplex with vertices $P_{1}=(1,0,\ldots,0), P_{2}=(0,1,\ldots,0), \ldots, P_{n+1}=(0,0,\ldots,1)\in\R^{n+1}$ and let
$P_0=(x_1^0, \ldots, x_{n+1}^0)\in\Delta_n$ be a point of the simplex, so that we have $\sum_{i=1}^{n+1} x_i^0=1$.
Let $\pi^1$ be the hyperplane of $\R^{n+1}$ through $P_0$ and parallel to the plane containing the vertices $\{P_2,\ldots,P_{n+1}\}$ and let $\{P_1^1,\ldots,P_n^1\}$ be the points defined by $P_i^1=\pi^1\cap\ell_{1,i+1}^1, i=1,\ldots,n,$ where $\ell_{1,j}^1$ is the line trough $P_1$ and $P_j$, $j=2,\ldots,n+1$.
Next, consider the hyperplane $\pi^2$ trough $P_0$ and parallel to the plane containing the vertices $P_2^1,\ldots,P_n^1$ and, as before, define the points
$\{P_1^2,\ldots,P_{n-1}^2\}$ by  $P_i^2=\pi^2\cap\ell_{1,i+1}^2, i=1,\ldots,n-1,$  where $\ell_{1,j}^2$ is the line trough $P_1^1$ and $P_j^1$, $j=2,\ldots,n$. Applying this procedure fot $n-1$ steps we obtain a sequence of points $P_1,P_1^1,\ldots, P_1^{n-1}$.
Now for a fixed set of positive real constants $a=\{a_1,\ldots,a_{n+1}\}$ we define
\begin{align}\label{eq-def-delta}
\delta(P_0)&=\delta(P_0,P_1)\\
&=\left\| P_1^{n-1} -   P_0\right\|_{g}+\left\| P_1^{n-2} -  P_1^{n-1}\right\|_{g}+\left\|P_1^{n-3}-P_1^{n-2}\right\|_{g}+\cdots+\left\|P_1-P_1^{1}\right\|_{g}.\nonumber
\end{align}
Here $\|\cdot\|_{g}$ represents the norm  with respect to the unique inner product $\langle ,\rangle_{g}$ in ${\mathbb R}^{n+1}$ defined by
$$
\langle \mathbf{v}_{i}, \mathbf{v}_{j}\rangle_{g}=\delta_{i,j}\,
a_i a_j
$$
where, denoting by $O$ the origin of ${\mathbb R}^{n+1}$,
$$
\mathbf{v}_{1}=P_1-O\,,\quad \mathbf{v}_{2}=P_2-P_1\,,\ldots, \mathbf{v}_{n+1}=P_{n+1}-P_n\,.
$$
Note that, as the origin does not belong to the simplex and $P_1,\ldots, P_{n+1}$ are affinely independent,  $\{\mathbf{v}_{1},\ldots, \mathbf{v}_{n+1}\}$ forms a  basis of ${\mathbb R}^{n+1}$.
\begin{theorem}
Let $\Delta_n$ be the canonical $n$-dimensional simplex  and let $P_0=(x_1^0, \ldots, x_{n+1}^0)\in\Delta_n$ be a point of the simplex. Let $a=\{a_1,\ldots,a_{n+1}\}$ be a set of positive real constants. Then, with the above notation, we have
\begin{eqnarray}\label{eq-main-delta}
\delta(P_0)&=&a_{1}+\cdots + a_{n+1} - (a_{1}+\cdots + a_{n+1})x_{1}^{0} - (a_{2}+\cdots + a_{n+1}) x_{2}^{0} -\cdots\nonumber\\
&&  - (a_{n}+a_{n+1}) x_{n}^{0} - a_{n+1} x_{n+1}^{0}
\end{eqnarray}
\end{theorem}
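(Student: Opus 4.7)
The plan is to prove \eqref{eq-main-delta} by induction on $n$, the dimension of the simplex. For the base case $n=1$, the simplex degenerates to the segment joining $P_1$ to $P_2$, the iterative construction produces no intermediate points, and the path is simply $P_0 P_1$ of $g$-length $(1-x_1^0)a_2$, which matches the right-hand side after using $x_2^0 = 1 - x_1^0$.

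For the inductive step, the key geometric observation is that the path naturally decomposes into two pieces: (a) the terminal segment from $P_1^1$ to $P_1$, and (b) the remaining portion from $P_0$ to $P_1^1$, which lies entirely inside the $(n-1)$-simplex $\Sigma' := \Delta_{P_1^1, \ldots, P_n^1}$ and, by construction, coincides with the iteratively defined path for $\Sigma'$ starting at $P_0$. This reduces the computation to piece (a), handled by a direct calculation, together with a single instance of the same formula one dimension lower.

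I would first compute the auxiliary points explicitly. Since $\pi^1 = \{x_1 = x_1^0\}$, intersecting with the line $\ell_{1,i+1}$ yields $P_i^1 = (x_1^0, 0, \ldots, 1-x_1^0, \ldots, 0)$ with the entry $1-x_1^0$ in position $i+1$. Consequently $P_{i+1}^1 - P_i^1 = (1-x_1^0)\mathbf{v}_{i+2}$, and the barycentric coordinates of $P_0$ inside $\Sigma'$ are $\alpha_i = x_{i+1}^0 / (1-x_1^0)$. Piece (a) has $g$-length $(1-x_1^0)a_2$ because $P_1 - P_1^1 = -(1-x_1^0)\mathbf{v}_2$. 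Applying the induction hypothesis to $\Sigma'$, whose natural basis of successive differences has $g$-norms $(1-x_1^0)a_3, \ldots, (1-x_1^0)a_{n+1}$, gives piece (b) as $(1-x_1^0)$ times the $(n-1)$-dimensional formula evaluated at the coordinates $\alpha_i$ with the rescaled effort constants. Summing (a) and (b), substituting back $\alpha_i = x_{i+1}^0/(1-x_1^0)$, and collecting like terms should yield \eqref{eq-main-delta}.

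The main obstacle is the bookkeeping inside $\Sigma'$: this smaller simplex is affinely embedded in $\R^{n+1}$ rather than isomorphic to a copy of $\Delta_{n-1}$ sitting in $\R^n$, so one must verify that the restriction of $g$ to the relevant directions realises the correct rescaling of the inductive inner product, and that the iterative auxiliary points $P_1^k, P_2^k, \ldots$ arising from the construction applied \emph{within} $\Sigma'$ genuinely coincide with the corresponding points produced by the original construction in $\Delta_n$. Once this identification is set up cleanly, the remainder is routine algebraic manipulation.
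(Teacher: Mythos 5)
Your proposal follows essentially the same route as the paper's appendix proof: induction on the dimension, peeling off the terminal segment $P_1^1P_1$, and reducing the remaining path to the $(n-1)$-dimensional case by observing that $\Sigma'$ is a homothetic copy (ratio $1-x_1^0$) of the standard simplex with rescaled effort constants — the "main obstacle" you flag is exactly what the paper resolves by writing down the explicit isometry onto $\{x_1=0\}$ followed by the homothety $P\mapsto P/(1-x_1^0)$ and applying the inductive hypothesis with the constants $\{a_2,\ldots,a_{n+1}\}$. The one discrepancy is an index shift: with the appendix's literal definition $\langle\mathbf v_i,\mathbf v_j\rangle_g=\delta_{ij}a_ia_j$ and $\mathbf v_2=P_2-P_1$, your base-case length $(1-x_1^0)a_2$ is what that definition gives, but it does \emph{not} equal the right-hand side of \eqref{eq-main-delta} for $n=1$, which simplifies to $a_1(1-x_1^0)$; carried through, your computation yields $\sum_{j}a_{j+1}\bigl(1-x_1^0-\cdots-x_j^0\bigr)$ rather than $\sum_{j}a_{j}\bigl(1-x_1^0-\cdots-x_j^0\bigr)$. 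This off-by-one is really an inconsistency internal to the paper (its proof silently uses $\|P_{i+1}-P_i\|_g=a_i$, matching the main text, while its stated inner product gives $\|P_{i+1}-P_i\|_g=a_{i+1}$), so your argument is correct once a single convention for the norms of the $\mathbf v_i$ is fixed and used consistently on both sides.
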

\begin{proof}
We shall prove the theorem by induction on the dimension $n$ of the simplex. If $n=1$, the simplex degenerates into the segment from the point $P_{1}=(1,0)$ to $P_{2}=(0,1)$. Given a point $P_0=(x_1^0, x_2^0)=(x^0_{1},1-x^0_{1})\in\Delta_1$, from the definition \eqref{eq-def-delta} of $\delta(P_0)$ we have
$\delta(P_0)=d_g(P_0,P_1)=a_1 (1-x^0_{1})$ which coincides with \eqref{eq-main-delta} for $n=1$.

Let now assume that the \eqref{eq-main-delta}  is true for any canonical simplex of dimension $n$ and for any set $a$ of positive real constants with $|a|=n+1$. Let prove that it is valid for the canonical $(n+1)$-dimensional simplex and for any set of $a=\{a_1,\ldots, a_{n+2}\}$ of positive real constants. Let $\Delta_{n+1}$ be the $(n+1)$-dimensional simplex with vertices $P_{1}=(1,0,\ldots,0), P_{2}=(0,1,\ldots,0), \ldots, P_{n+2}=(0,0,\ldots,1)\in\R^{n+2}$ and let $P_0=(x_1^0, \ldots, x_{n+2}^0)\in\Delta_{n+1}$ be a point of the simplex. Let's first compute the coordinates of the points $P_i^1, i=1,\ldots,n+1$. A straightforward computation shows that the hyperplane $\pi^1$ has Cartesian equation
$$
\langle N^1, P-P_0\rangle=0
$$
where $N^1=(n+1,-1,-1,\ldots,-1)$ is the normal vector to $\pi^1$ while $P=(x_1,\ldots, x_{n+2})$. The line $\ell_{1,j}^1$ is parametrised by
$$
P=P_1+t (P_j-P_1).
$$
Substituting the latter in the equation of the plane $\pi^1$ we obtain that the intersection point correspond to the parameter $t=1-x_1^0$. We thus have
$$
P_i^1=P_1+(1-x_1^0)(P_{i+1}-P_1)=x_1^0 P_1+(1-x_1^0)P_{i+1},\quad i=1,\ldots,n+1.
$$
In coordinates we have
$$
\begin{cases}
P_1^1=(x_1^0,1-x_1^0,0,0,\ldots,0)\\
P_2^1=(x_1^0,0,1-x_1^0,0,\ldots,0)\\
\quad \vdots\\
P_{n+1}^1=(x_1^0,0,\ldots,0, 1-x_1^0)\\
\end{cases}
$$
and
$$
P_0=(x_1^0, \ldots, x_{n+2}^0).
$$
The points $P_1^1,\ldots, P_{n+1}^1$ and $P_0$  lie in the hyperplane $x_1=x_1^0$ and we can identify them with the points
$$
\begin{cases}
P_1^1=(0,1-x_1^0,0,0,\ldots,0)\\
P_2^1=(0,0,1-x_1^0,0,\ldots,0)\\
\quad \vdots\\
P_{n+1}^1=(0,0,\ldots,0, 1-x_1^0)\\
P_0=(0,x_2^0, \ldots, x_{n+2}^0)
\end{cases}
$$
via the isometry between the hyperplane $x_1=x_1^0$ and the hyperplane $x_1=0$. If we now apply the homothety
\begin{equation}\label{eq-omothety}
P\mapsto \frac{1}{1-x_1^0}P
\end{equation}
the points $P_1^1,\ldots, P_{n+1}^1$ and $P_0$ are mapped to
$$
\begin{cases}
\bar{P}_1^1=(0,1,0,0,\ldots,0)\\
\bar{P}_2^1=(0,0,1,0,\ldots,0)\\
\quad \vdots\\
\bar{P}_{n+1}^1=(0,0,\ldots,0, 1)\\
\bar{P}_0=\left(0,\frac{x_2^0}{1-x_1^0}, \ldots, \frac{x_{n+2}^0}{1-x_1^0}\right),
\end{cases}
$$
which can be identified with the following points in $\R^{n+1}$
$$
\begin{cases}
\bar{P}_1^1=(1,0,0,\ldots,0)\\
\bar{P}_2^1=(0,1,0,\ldots,0)\\
\quad \vdots\\
\bar{P}_{n+1}^1=(0,\ldots,0, 1)\\
\bar{P}_0=\left(\frac{x_2^0}{1-x_1^0}, \ldots, \frac{x_{n+2}^0}{1-x_1^0}\right).
\end{cases}
$$
At this stage, we can apply the inductive hypothesis to the canonical $n$-dimensional simplex with vertices $\bar{P}_1^1,\ldots,\bar{P}_{n+1}^1$, with respect to the set of real constants $\bar{a}=\{a_2,\ldots, a_{n+2}\}$, and considering $\bar{P}_0=\left(\frac{x_2^0}{1-x_1^0}, \ldots, \frac{x_{n+2}^0}{1-x_1^0}\right)$ as a point of the simplex. We have
\[
\begin{split}
\delta(\bar{P}_0,\bar{P}_1^1)=&\frac{1}{1-x_1^0}\Big((a_{2}+\cdots + a_{n+2})(1- x_1^0)- (a_{2}+\cdots + a_{n+2})x_{2}^{0} - \cdots \\
&- (a_{n+1}+a_{n+2}) x_{n+1}^{0} - a_{n+2} x_{n+2}^{0}\Big).
\end{split}
\]
Taking into account that the homothety \eqref{eq-omothety} change the norm $\|\;\|_g$ by the factor $1/(1-x_1^0)$, we find that
\[
\begin{split}
\delta(P_0,P_1^1)=(1-x_1^0)\, \delta(\bar{P}_0,\bar{P}_1^1)=&
(a_{2}+\cdots + a_{n+2})(1- x_1^0)- (a_{2}+\cdots + a_{n+2})x_{2}^{0} - \cdots \\
&- (a_{n+1}+a_{n+2}) x_{n+1}^{0} - a_{n+2} x_{n+2}^{0}
\end{split}
\]
Finally
\begin{eqnarray*}
\delta(P_0,P_1)&=&\delta(P_0,P_1^1)+\left\| P_1 -   P_1^1\right\|_{g}\\
&=& (a_{2}+\cdots + a_{n+2})(1- x_1^0)- (a_{2}+\cdots + a_{n+2})x_{2}^{0} - \cdots \\
&&- (a_{n+1}+a_{n+2}) x_{n+1}^{0} - a_{n+2} x_{n+2}^{0}+(1- x_1^0)a_1\\
&=&
(a_1+a_{2}+\cdots + a_{n+2})-(a_1+a_{2}+\cdots + a_{n+2}) x_1^0- (a_{2}+\cdots + a_{n+2})x_{2}^{0} - \cdots \\
&&- (a_{n+1}+a_{n+2}) x_{n+1}^{0} - a_{n+2} x_{n+2}^{0}
\end{eqnarray*}
which is exactly \eqref{eq-main-delta} for $n+1$.
\end{proof}


\begin{thebibliography}{99}



\bibitem{Abramo:2016} G. Abramo, C.A. D'Angelo, \emph{Refrain from adopting the combination of citation and journal metrics to grade publications, as used in the Italian national research assessment exercise (VQR 2011–2014)}, Scientometrics \textbf{109} (2016), 2053--2065

\bibitem{ANVUR:2017-1} ANVUR, \emph{Nota metodologica sul calcolo dell’indicatore ISPD}. Techncal report available at \url{https://www.anvur.it/wp-content/uploads/2018/04/Nota_metodologica_ISPD_AN_.pdf} (2017)

\bibitem{ANVUR:2017-2} ANVUR, \emph{Final area report Group of Evaluation Experts}. Technical report available at \url{https://www.anvur.it/rapporto-2016/} (2017)


\bibitem{Baccini} A. Baccini, G. De Nicolao, \emph{Do they agree? Bibliometric evaluation versus informed peer review in the Italian research assessment exercise}, Scientometrics \textbf{108} (2016), 1651--1671


\bibitem{DORA} R. Cagan, \emph{The San Francisco Declaration on Research Assessment}, Disease Models \& Mechanisms
(2013)  \textbf{6}, Editorial


\bibitem{Demetrescu} C. Demetrescu, F. Lupia, A. Mendicelli, A. Ribichini, F. Scarcello, M. Schaerf, \emph{On the Shapley value and its application to the Italian VQR research assessment exercise}, Journal of Informetrics \textbf{13} (2019), 87--104


\bibitem{Franceschini} F. Franceschini, D. Maisano, \emph{Critical remarks on the Italian research assessment exercise VQR 2011–2014}, Journal of Informetrics
\textbf{11} (2017), 337--357




\bibitem{Hoeffding: 1963} W. Hoeffding  \emph{Probability inequalities for sums of bounded random variables}  Journal of the American Statistical Association \textbf{58 (301)}, 13--30



\bibitem{poggi} G. Poggi, C. A. Nappi, \emph{Il Voto standardizzato per l’esercizio VQR 2004-2010}, Rassegna italiana di valutazione, a. XVIII, n. 59 (2014), 34--58

\bibitem{Spiegelhalter} D. J. Spiegelhalter, \emph{Funnel plots for comparing institutional performance}, Statistics in Medicine,  \textbf{24}, 8 (2005), 1185--1202

\bibitem{taylor1995} J. Taylor, \emph{A statistical analysis of the 1992 Research Assessment Exercise}, Journal of the Royal Statistical Society Series A - Statistics in Society \textbf{158} (1995), 241--261

\bibitem{varin2016} C. Varin, M. Cattelan, D. Firth, \emph{Statistical modelling of citation exchange between statistics journals}, Journal of the Royal Statistical Society Series A - Statistics in Society \textbf{179} (2016), 1--63

\bibitem{Weiss2019} C.H. Weiss, \emph{On some measures of ordinal variation}, Journal of Applied Statistics \textbf{46, (16)} (2019), 2905--2926

\bibitem{Weiss2020} C.H. Weiss, \emph{Distance-Based Analysis of Ordinal Data and Ordinal Time Series}, Journal of the American Statistical Association \textbf{ 531, (115)} (2020), 1189--1200










\end{thebibliography}
\end{document}